\newcommand{\betredge}[1]{\betstep{#1}}
\newcommand{\betstep}[1]{\mbox{$\stackrel{#1}{\rightarrow}$}}
\newcommand{\tlb}{\ensuremath{[\![}}
\newcommand{\trb}{\ensuremath{]\!]}}
\newcommand{\sut}{ \:|\: }
\newcommand{\su}{\subseteq}
\newcommand{\lf}{\footnotesize}
\newcommand{\m}{\mathcal} 
\newcommand{\dg}{\text{deg}}
\newcommand{\cP}{\text{P}}
\newcommand{\SW}{\textit{SW}}
\newcommand{\cedit}
{\textcolor{black}}
\newcommand{\gsrem}[1]
{}
\newcommand{\mrrem}[1]{\marginpar{\tiny\textcolor{blue}{MR: #1}}}
\title{Coordination Games on Graphs\thanks{An extended abstract of this paper appeared in \cite{ARSS14}. Part of this research has been carried out while the second author was a post-doctoral researcher at Sapienza University of Rome, Italy.}}
\author{%
Krzysztof R. Apt \inst{1}
\and
Bart de Keijzer \inst{1}
\and
Mona Rahn \inst{1}
\and
Guido Sch\"{a}fer \inst{1,2}
\and
Sunil Simon \inst{3}
}
\institute{%
CWI, Amsterdam, The Netherlands
\and
VU University Amsterdam, The Netherlands
\and
IIT Kanpur, Kanpur, India
}
\begin{document}
\sloppy
\maketitle

\begin{abstract}
We introduce natural strategic games on graphs, which capture the idea of coordination in a local setting. We study the existence of equilibria that are resilient to coalitional deviations of unbounded and bounded size (i.e., \emph{strong equilibria} and \emph{$k$-equilibria} respectively). We show that pure Nash equilibria and $2$-equilibria exist, and give an example in which no $3$-equilibrium exists. Moreover, we prove that strong equilibria exist for various special cases.

We also study the price of anarchy (PoA) and price of stability (PoS) for these solution concepts. We show that the PoS for strong equilibria is $1$ in almost all of the special cases for which we have proven strong equilibria to exist. The PoA for pure Nash equilbria turns out to be unbounded, even when we fix the graph on which the coordination game is to be played. For the PoA for $k$-equilibria, we show that the price of anarchy is between $2(n-1)/(k-1) - 1$ and $2(n-1)/(k-1)$. The latter upper bound is tight for $k=n$ (i.e., strong equilibria).

Finally, we consider the problems of computing strong equilibria and of determining whether a joint strategy is a $k$-equilibrium or strong equilibrium. We prove that, given a coordination game, a joint strategy $s$, and a number $k$ as input, it is co-NP complete to determine whether $s$ is a $k$-equilibrium. On the positive side, we give polynomial time algorithms to compute strong equilibria for various special cases.
\end{abstract}

\section{Introduction}


In game theory, coordination games are used to model situations in which players are rewarded for agreeing on a common strategy, e.g., by deciding on a common technological or societal standard. In this paper we introduce and study a very simple  class of coordination games, which we call \emph{coordination games on graphs}: 
\begin{quote}
We are given a finite (undirected) graph, of which the nodes correspond to the players of the game. Each player chooses a color from a set of colors available to her. The payoff of a player is the number of neighbors who choose the same color. 
\end{quote}
Our main motivation for studying these games is that they
constitute a natural class of strategic games that capture the following three key characteristics:
\begin{enumerate}
\item \emph{Join the crowd property} \cite{SA15}: the payoff of each player weakly increases when more players choose her strategy.
\item \emph{Asymmetric strategy sets}: players may have different strategy sets.
\item \emph{Local dependency}: the payoff of each player depends only on the choices made by certain groups of players (i.e., neighbors in the given graph).
\end{enumerate}
The above characteristics are inherent to many applications. As a concrete example,  consider a situation in which several clients have to choose between multiple competing providers offering the same service (or product), such as peer-to-peer networks, social networks, photo sharing platforms, and mobile phone providers. 
Here the benefit of a client for subscribing to a specific provider increases with the number of clients who opt for this provider. Also, each client typically cares only about the subscriptions of certain other clients (e.g., friends, relatives, etc.). 

In coordination games on graphs it is beneficial for each player to align her choices with the ones of her neighbors. As a consequence, the players may attempt to increase their payoffs by coordinating their choices in groups (also called \emph{coalitions}). In our studies we therefore focus on equilibrium concepts that are resilient to deviations of groups; more specifically we study \emph{strong equilibria} \cite{Aumann59} and \emph{$k$-equilibria} (also known as \emph{$k$-strong equilibria}) of coordination games on graphs. 
Recall that in a strong equilibrium no coalition of players can profitably deviate in the sense that every player of the coalition strictly improves her payoff. Similarly, in a $k$-equilibrium with $k \in \{1, \dots, n\}$, where $n$ is the number of players, no coalition of players of size at most $k$ can profitably deviate.

\paragraph{Our contributions.}

The focus of this paper is on the existence, inefficiency and computability of strong equilibria and $k$-equilibria of coordination games on graphs. Our main contributions are as follows: 

\bigskip\noindent
\emph{1. Existence.} 
We show that Nash equilibria and 2-equilibria always exist. On the other hand, $k$-equilibria for $k \geq 3$ do not need to exist.
We therefore derive a complete characterization of the values of $k$ for which $k$-equilibria exist in our games.

We also show that strong equilibria exist if only two colors are available.
Further, we identify several graph structural properties that guarantee the existence of strong equilibria: in particular they exist if the underlying graph is a pseudoforest\footnote{Recall that in a pseudoforest each connected component has at most one cycle.}, and when every pair of cycles in the graph is edge-disjoint. Also, they exist if the graph is \emph{color complete}, i.e., if for each available color $x$ the components of the subgraph induced by the nodes having color $x$ are complete. Moreover, existence of strong equilibria is guaranteed in case the coordination game is played on a \emph{color forest}, i.e., for every color, the subgraph induced by the players who can choose that color is a forest.

We also address the following question. Given a coordination game denote its \emph{transition value} as the value of $k$ for which a $k$-equilibrium exists but a $(k+1)$-equilibrium does not. The question then is to determine for which values of $k$ a game with transition value $k$ exists. We exhibit a game with transition value 4.


In all our proofs the existence of strong equilibria is established by showing a stronger result, namely that the game has the \emph{coalitional finite improvement property}, i.e., every sequence of profitable joint deviations is finite (see Section~\ref{sec:prelim} for a formal definition).

\bigskip\noindent
\emph{2. Inefficiency.}
We also study the \emph{inefficiency} of equilibria. In our context, the \emph{social welfare} of a joint strategy is defined as the sum of the payoffs of all players. The \emph{$k$-price of anarchy} \cite{AFM09} (resp. \emph{$k$-price of stability}) refers to the ratio between the social welfare of an optimal outcome and the minimum (resp.~maximum) social welfare of a $k$-equilibrium\footnote{The $k$-price of anarchy is also commonly known as the \emph{$k$-strong price of anarchy}.}.

We show that the price of anarchy is unbounded, independently of the underlying graph structure, and the strong price of anarchy is $2$. In general, for the $k$-price of anarchy with $k \in \{2, \dots, n-1\}$ we derive almost matching lower and upper bounds of $2\frac{n-1}{k-1} - 1$ and $2\frac{n-1}{k-1}$, respectively (given a coordination game that has a $k$-equilibrium). 
We also prove that the strong price of stability is 1 for the cases that there are only two colors, or the graph is a pseudoforest or color forest.

Our results thus show that as the coalition size $k$ increases, the worst-case inefficiency of $k$-equilibria decreases from $\infty$ to $2$. In particular, we obtain a constant $k$-price of anarchy for $k = \Omega(n)$. 

\bigskip\noindent
\emph{3. Complexity.} 
We also address several computational complexity issues.
Given a coordination game, a joint strategy $s$, and a number $k$ as input, it is co-NP complete to determine whether $s$ is a $k$-equilibrium. 
However, we show that this problem can be solved in polynomial time in case the graph is a color forest.
We also give polynomial time algorithms to compute strong equilibria for the cases of color forests, color complete graphs, and pseudoforests.


\paragraph{Related work.} 

Our coordination games on graphs are related to various well-studied types of games. We outline some connections below. 

First, coordination games on graphs are \emph{polymatrix games}.
Recall that a polymatrix game (see \cite{How72,Jan68}) is a finite
strategic game in which the payoff for each player is the sum of the
payoffs obtained from the individual games the player plays with each
other player separately.
Cai and Daskalakis \cite{CD11} considered a special class of
polymatrix games which they call \emph{coordination-only polymatrix
  games}. 
These games are identical to coordination games on graphs with edge weights. 
They showed that pure Nash equilibria exist and that
finding one is PLS-complete. The proof of the latter result crucially
exploits that the edge weights can be negative. Note that negative
edge weights can be used to enforce that players anti-coordinate. Our
coordination games do not exhibit this characteristic and are
therefore different from theirs.


Second, our coordination games are related to \emph{additively
  separable hedonic games (ASHG)} \cite{Baner,Bogo}, which were
originally proposed in a cooperative game theory setting. Here the
players are the nodes of an edge weighted graph and form coalitions.
The payoff of a node is defined as the total weight of all edges to
neighbors that are in the same coalition. If the edge weights are
symmetric, the corresponding ASHG is said to be \emph{symmetric}.
Recently, a lot of work focused on computational issues of these games
(see, e.g., \cite{ABS10,ABS11,Gair}). Aziz and Brandt \cite{Aziz}
studied the existence of strong equilibria in these games.  The
PLS-hardness result established in \cite{Gair} does not carry over to
our coordination games because it makes use of negative edge weights,
which we do not allow in our model. Note also that in ASHGs every
player can choose to enter every coalition which is not necessarily
the case in our coordination games. Such restrictions can be imposed
by the use of negative edge weights (see also \cite{Gair}) and our
coordination games therefore constitute a special case of symmetric
ASHGs with \emph{arbitrary} edge weights.

Third, our coordination games on graphs are related to
\emph{congestion games} \cite{Ros73}. In particular, they are
isomorphic to a special case of congestion games with weakly
decreasing cost functions (assuming that each player wants to minimize
her cost).  Rozenfeld and Tennenholtz \cite{RT06} derived a structural
characterization of strategy sets that ensure the existence of strong
equilibria in such games. By applying their characterization to our
(transformed) games one obtains that strong equilibria exist if the
underlying graph of the coordination game is a matching or complete
(both results also follow trivially from our studies).  Bil\`o et
al.~\cite{BFFM11} studied congestion games where the players are
embedded in a (possibly directed) \emph{influence graph} (describing
how the players delay each other). They analyzed the existence and
inefficiency of pure Nash equilibria in these games. However, because
the delay functions are assumed to be linearly increasing in the
number of players, these games do not cover the games we study here.

Further, coordination games on graphs are special cases of the
\emph{social network games} introduced and analyzed in \cite{AS13} (if
one uses in them thresholds equal to 0). These are games associated
with a threshold model of a social network introduced in \cite{AM11}
which is based on weighted graphs with thresholds.


Coordination games are also related to the problem of clustering, where
the task is to partition the nodes of a graph in a meaningful manner.
If we view the strategies as possible cluster names, then a Nash
equilibrium of our coordination game on a graph corresponds to a
``satisfactory'' clustering of the underlying graph. Hoefer
\cite{Hoefer2007} studied clustering games that are also polymatrix
games based on graphs. Each player plays one of two possible base
games depending on whether the opponent is a neighbor in the given
graph or not.  Another more recent approach to clustering through
game theory is by Feldman, Lewin-Eytan and Naor \cite{FLN15}. In this
paper both a fixed clustering of points lying in a metric space and a
correlation clustering (in which the distance is in [0,1] and
each point has a weight denoting its `influence') is viewed as a
strategic hedonic game. However, in both references each player has
the same set of strategies, so the resulting games are not comparable
with ours.

Strategic games that involve coloring of the vertices of a graph have
also been studied in the context of the vertex coloring problem.
These games are motivated by the question of finding the chromatic
number of a graph. As in our games, the players are nodes in a graph
that choose colors. However, the payoff function differs from the one
we consider here: it is $0$ if a neighbor chooses the same color and
it is the number of nodes that chose the same color otherwise.
Panagopoulou and Spirakis \cite{PS08} showed that an efficient local
search algorithm can be used to compute a good vertex coloring.
Escoffier, Gourv\`es and Monnot \cite{EGM12} extended this work by
analyzing socially optimal outcomes and strong equilibria.
Chatzigiannakis et al.~\cite{CKPS10} studied the vertex coloring
problem in a distributed setting and showed that under certain
restrictions a good coloring can be reached in polynomial time.

Strong and $k$-equilibria in strategic games on graphs were also
studied in Gourv\`es and Monnot \cite{GM09,GM10}. These games are
related to, respectively, the \texttt{MAX-CUT} and \texttt{MAX-$k$-CUT} problems. However, they do not satisfy the join the crowd
property, so, again, the results are not comparable with ours.

To summarize, in spite of these close connections, our coordination
games on graphs are different from all classes of games mentioned
above. Notably, this is due to the fact that our games combine the
three properties mentioned above, i.e., join the crowd, asymmetric
strategy sets and local dependencies modeled by means of an undirected
graph.

Research reported here was recently followed in two different
directions.  In \cite{ASW15} and \cite{SW16} coordination games
on directed graphs were considered, while in \cite{rahn:schaefer:2015}
coordination games on weighted undirected graphs were analyzed.
Both setups lead to substantially different results that are
  discussed in the final section.
Finally, \cite{FF15} studied the strong price of anarchy for a general
class of strategic games that, in particular, include as special cases
our games and the \texttt{MAX-CUT} games mentioned above.

As a final remark, let us mention that the coordination games on
graphs are examples of games on networks, a vast research area
surveyed in \cite{JZ12}.


\paragraph{Our techniques.} 

Most of our existence results are derived through the application of one technical key lemma. This lemma relates the change in social welfare caused by a profitable deviation of a coalition to the size of a minimum feedback edge set of the subgraph induced by the coalition\footnote{Recall that a \emph{feedback edge set} is a set of edges whose removal makes the graph acyclic.}. This lemma holds for arbitrary graphs and provides a tight bound on the maximum decrease in social welfare caused by profitable deviations. Using it, we prove our existence results by means of a generalized ordinal potential function argument. In particular, this enables us to show that every sequence of profitable joint deviations is finite. Further, we use the generalized ordinal potential function to prove that the strong price of anarchy is 1  and that strong equilibria can be computed efficiently for certain graph classes. 


The non-existence proof of $3$-equilibria is based on an instance whose graph essentially corresponds to the skeleton of an octahedron and whose strategy sets are set up in such a way that at most one facet of the octahedron can be unicolored. We then use the symmetry of this instance to prove our non-existence result.

The upper bound on the $k$-price of anarchy is derived through a combinatorial argument. We first fix an arbitrary coalition of size $k$ and relate the social welfare of a $k$-equilibrium to the social welfare of an optimum within this coalition. We then extrapolate this bound by summing over all coalitions of size at most $k$. We believe that this approach might also prove useful to analyze the $k$-price of anarchy in other contexts. 


\section{Preliminaries}
\label{sec:prelim}

A \emph{strategic game} $\mathcal{G} := (N, (S_i)_{i \in N}, (p_i)_{i \in N})$ consists of a set $N := \{1, \dots, n\}$ of $n > 1$ players, a non-empty set $S_i$ of \emph{strategies}, and a \emph{payoff function} $p_i : S_1 \times \cdots \times S_n \rightarrow \mathbb{R}$ for each player $i \in N$.
We denote $S_1 \times \cdots \times S_n$ by $S$, 
call each element $s \in S$
a \emph{joint strategy}, and abbreviate the sequence
$(s_{j})_{j \neq i}$ to $s_{-i}$. Occasionally we write $(s_i,
s_{-i})$ instead of $s$.  

We call a non-empty subset $K := \{k_1, \ldots, k_m\}$ of $N$ a \emph{coalition}. Given a joint strategy $s$ we abbreviate the sequence $(s_{k_1}, \ldots, s_{k_m})$ of strategies to $s_K$ and $S_{k_1} \times \cdots \times S_{k_m}$ to $S_{K}$. We also write $(s_K, s_{-K})$ instead of $s$. If there is a strategy $x$ such that $s_i = x$ for all players $i \in K$, we also write $(x_K, s_{-K})$ for $s.$




Given two joint strategies $s'$ and $s$ and a coalition $K$, we say that $s'$ is a \emph{deviation of the players in $K$}
from $s$ if $K = \{i \in N \mid s_i \neq s_i'\}$.  
We denote this by $s \betredge{K} s'$. If in addition $p_i(s') > p_i(s)$ holds for all $i \in K$, we say that the deviation $s'$ from $s$ is \emph{profitable}. Further, we say that the players in $K$ \emph{can profitably deviate from $s$} if there exists a profitable deviation of these players
from $s$. 

Next, we call a joint strategy $s$ a \emph{k-equilibrium}, where $k \in \{1, \dots, n\}$, if no coalition of at most $k$ players can profitably deviate from $s$. 
Using this definition, a \emph{Nash equilibrium} is a 1-equilibrium and a \emph{strong equilibrium} \cite{Aumann59} is an $n$-equilibrium. 

Given a joint strategy $s$, we call the sum $\mathit{\SW}(s)=\sum_{i \in N} p_i(s)$ the \emph{social welfare} of $s$. When the social welfare of $s$ is maximal, we call $s$ a \emph{social optimum}. 
Given a finite game that has a $k$-equilibrium, its \emph{$k$-price of anarchy (resp. stability)} is the ratio $\mathit{\SW}(s)/\mathit{\SW}(s')$, where $s$ is a social optimum and $s'$ is a $k$-equilibrium with the lowest (resp. highest) social welfare\footnote{In the case of division by zero, we define the outcome as $\infty$.}. The \emph{(strong) price of anarchy} refers to the $k$-price of anarchy with $k = 1$ ($k = n$). The \emph{(strong) price of stability} is defined analogously. 

A \emph{coalitional improvement path}, in short a \emph{c-improvement
  path}, is a maximal sequence $(s^1, s^2, \dots)$ of joint strategies
such that for every $k > 1$ there is a coalition $K$ such that $s^k$
is a profitable deviation of the players in $K$ from
$s^{k-1}$. Clearly, if a c-improvement path is finite, its last
element is a strong equilibrium. We say that $\mathcal{G}$ has the
\emph{finite c-improvement property} (\emph{c-FIP}) if every
c-improvement path is finite. So if $\mathcal{G}$ has the c-FIP, then
it has a strong equilibrium. Further, we say that the function $P: S
\rightarrow A$ (where $A$ is any set) is a \emph{generalized ordinal
  c-potential} for $\m G$ if there exists a strict partial
ordering $\succ$ on the set $A$ such that if $s \betredge{K} s'$ is a profitable deviation, then $P(s') \succ
P(s)$. A generalized ordinal potential is also called a \emph{generalized strong potential}
\cite{Harks13,Holzman97}.
%
It is easy to see that if a finite game admits a generalized ordinal c-potential then the game has the c-FIP. 
The converse also holds: a finite game that has the c-FIP admits a generalized ordinal c-potential. The latter fact is folklore; we give a self-contained proof in Appendix~\ref{app:gen-ord-potential}. 

Note that in the definition of a profitable deviation of a coalition,
we insisted that all members of the coalition change their strategies.
This requirement is irrelevant for the definitions of the $k$-equilibrium and the c-FIP, but it makes some
arguments slightly simpler.

\section{Coordination games on graphs}
\label{sec:coloring}
We now introduce the games we are interested in. Throughout the paper, we fix a finite set of colors $M$ of size $m$, an undirected graph $G = (V,E)$ without self-loops, and a \emph{color assignment} $A$. The latter is a function that assigns to each node $i$ a non-empty set $A_i \subseteq M$.
A node $j \in V$ is a \emph{neighbor} of the node $i \in V$ if $\{i,j\} \in E$.
Let $N_i$ denote the set of all neighbors of node $i$. 
We define a strategic game $\mathcal{G}(G, A)$ as follows:
\begin{itemize}
\item the players are identified with the nodes, i.e., $N = V$,
  
\item the set of strategies of player $i$ is 
  $A_i$,

\item the payoff function of player $i$ is $p_i(s) := |\{ j \in N_i \mid s_i = s_j \} |$.

\end{itemize}

\noindent So each node simultaneously chooses a color from the set available to her and the payoff to the node is the number of neighbors who chose the same color. 
We call these games \emph{coordination games on graphs}, from now on just \emph{coordination games}.

\begin{figure}[t]
\begin{center}
  \begin{tikzpicture}

    \node [label={\lf $\{ \underline{a}, c\}$}] (1) at (-4, 0) {$1$};       
    \node [label={\lf $\{a, \underline{b}\}$}] (2) at (-2, 1) {$2$}; 
    \node [label=below:{\lf $\{a, \underline{b}\}$}] (3) at (-2, -1) {$3$};     
    \node [label={\lf $\{\underline{b}, c\}$}] (4) at (0, 1) {$4$};
    \node [label=below:{\lf $\{\underline{b}, c\}$}] (5) at (0, -1) {$5$};
    \node [label={\lf $\{c, \underline{a}\}$}] (6) at (2, 1) {$6$};
    \node [label=below:{\lf $\{c, \underline{a}\}$}] (7) at (2,-1) {$7$};
    \node [label={\lf $\{b, \underline{a}\}$}] (8) at (4, 0) {$8$};   
       \draw (1) -- (2);       
       \draw (1) -- (3);              
       \draw[ultra thick] (2) -- (4);                            
       \draw[ultra thick] (2) -- (5);
       \draw[ultra thick] (3) -- (4);                            
       \draw[ultra thick] (3) -- (5);       
       \draw[ultra thick] (4) -- (5);       
       \draw (4) -- (1);              
       \draw (4) -- (8);                     
       \draw (5) -- (1);              
       \draw (5) -- (8);                     
       \draw (6) -- (4);                            
       \draw (6) -- (5);                     
       \draw (7) -- (4);                            
       \draw (7) -- (5);       
       \draw[ultra thick] (8) -- (6);       
       \draw[ultra thick] (8) -- (7);              
       \path[ultra thick] (1) edge [out=-75, in=-105] node {} (8);
\end{tikzpicture}
\caption{A graph with a color set assignment. The bold edges indicate pairs of players choosing the same color.}
\label{fig:non-existence}
\end{center}
\end{figure}
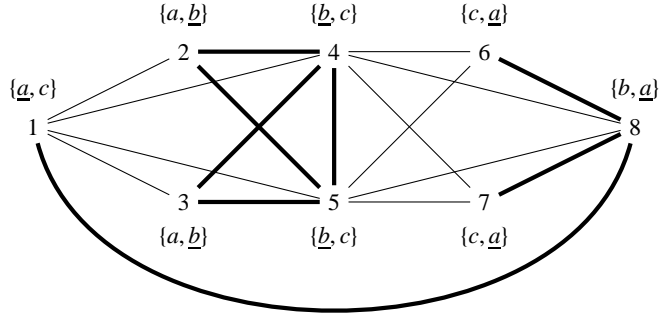

\begin{example}\label{exa:1}
Consider the graph and the color assignment depicted in
Figure~\ref{fig:non-existence}. 
Take the joint strategy that consists of the underlined strategies. Then the payoffs are as follows:

\begin{itemize}
\item 1 for the nodes 1, 6, 7,

\item 2 for the nodes 2, 3,

\item 3 for the nodes 4, 5, 8.
\end{itemize}

It is easy to see that the above joint strategy is a Nash equilibrium.
However, it is not a strong equilibrium because the coalition $K =
\{1, 4, 5, 6, 7\}$ can profitably deviate by choosing color $c$.
\qed
\end{example}

We now recall some notation and introduce some terminology.
Let $G = (V, E)$ be a graph. Given a set of nodes $K$, we denote by $G[K]$
the subgraph of $G$ induced by $K$ and by $E[K]$ the set of edges in $E$ that
have both endpoints in $K$. So $G[K] = (K, E[K])$.  Further,
$\delta(K)$ denotes the set of edges that have one node in $K$ and
the other node outside of $K$. Also, given a subgraph $C$ of $G$ we use $V(C)$ and $E(C)$ to refer to the set of nodes and the set of edges of $C$, respectively. 

Furthermore, we define $\SW_K(s) := \sum_{i \in K} p_i(s)$. Given a joint strategy $s$ we denote by
$E^+_s$ the set of edges $\{i,j\} \in E$ such that $s_i = s_j$. We call
these edges \emph{unicolored in $s$.} (In Figure \ref{fig:non-existence}, these are the bold edges.) Note that $\SW(s) = 2|E_s^+|$.
Finally, we call a subgraph \emph{unicolored in $s$} if all its nodes
have the same color in $s$.

\section{Existence of strong equilibria}
\label{sec:existence}

We begin by studying the existence of strong equilibria and $k$-equilibria of coordination games. We first prove our key lemma and then show how it can be applied to derive several existence results. 

\subsection{Key lemma}

Recall that an edge set $F \su E$ is a \emph{feedback edge set} of the graph $G = (V, E)$ if the graph $(V, E \setminus F)$ is acyclic.

\begin{lemma}[Key lemma]
\label{lem:tau}
Suppose $s \betredge{K} s'$ is a profitable deviation. Let $F$ be a feedback edge set of $G[K]$. Denote $\SW(s') - \SW(s)$ by $\Delta \SW$ and for a coalition $L$ denote $\SW_L(s) - \SW_L(s)$ by $\Delta \SW_L$. Then 
\begin{equation}\label{eqn:deltasw}
\Delta \SW = 2 (\Delta \SW_K - |E^+_{s'} \cap E[K]| + |E^+_s \cap E[K]|)
\end{equation}
and 
\begin{equation}\label{eqn:deltasw2} 
\Delta \SW > 2 (|F \cap E_s^+| - |F \cap E_{s'}^+|).
\end{equation}

\end{lemma}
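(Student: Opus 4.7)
The plan is to prove both claims by a direct edge-counting argument, using that only players in $K$ change strategies and that the deviation must strictly improve each member of $K$.

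For the first identity, I would partition the edges of $G$ into three groups: $E[K]$ (internal to $K$), $E[N \setminus K]$ (external), and $\delta(K)$ (boundary). Since every player $i \notin K$ plays the same strategy under $s$ and $s'$, the edges in $E[N \setminus K]$ contribute identically to $\SW$ in both joint strategies. By counting each unicolored edge in the appropriate players' payoffs, I would establish
\[
\SW_K(t) = 2\,|E^+_t \cap E[K]| + |E^+_t \cap \delta(K)|
\]
for $t \in \{s, s'\}$, and similarly $\SW_{N\setminus K}(t) = 2\,|E^+_t \cap E[N \setminus K]| + |E^+_t \cap \delta(K)|$. Taking differences and using that the $E[N \setminus K]$ contributions cancel, I would deduce $\Delta\SW_{N\setminus K} = |E^+_{s'} \cap \delta(K)| - |E^+_s \cap \delta(K)|$, and then solve the first display for this boundary-difference term and substitute into $\Delta\SW = \Delta\SW_K + \Delta\SW_{N\setminus K}$ to obtain identity~\eqref{eqn:deltasw}.

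For the strict inequality, the first ingredient is that $s \betredge{K} s'$ is profitable, so $p_i(s') \geq p_i(s) + 1$ for every $i \in K$, giving $\Delta\SW_K \geq |K|$. The second ingredient is the feedback edge set: writing $T := E[K] \setminus F$, the graph $(K, T)$ is acyclic, hence a forest on $|K|$ nodes, so $|T| \leq |K| - 1$. Now I would bound
\[
|E^+_{s'} \cap E[K]| - |E^+_s \cap E[K]| \;\leq\; |T| + |F \cap E^+_{s'}| - |F \cap E^+_s|,
\]
obtained by dropping the nonnegative term $|T \cap E^+_s|$ and using $|T \cap E^+_{s'}| \leq |T|$. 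Substituting this upper bound and the lower bound $\Delta\SW_K \geq |K|$ into identity~\eqref{eqn:deltasw} yields
\[
\Delta\SW \;\geq\; 2\bigl(|K| - (|K|-1) + |F \cap E^+_s| - |F \cap E^+_{s'}|\bigr) \;=\; 2 + 2\bigl(|F \cap E^+_s| - |F \cap E^+_{s'}|\bigr),
\]
which is strictly larger than $2(|F \cap E^+_s| - |F \cap E^+_{s'}|)$, establishing~\eqref{eqn:deltasw2}.

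The main obstacle is bookkeeping rather than any deep idea: one must be careful to count interior edges (in $E[K]$) with multiplicity two and boundary edges (in $\delta(K)$) with multiplicity one when expressing $\SW_K$, and one must not lose the strictness from the $+2$ term when combining the forest bound $|T| \leq |K|-1$ with the profitability bound $\Delta\SW_K \geq |K|$. Once these two bounds are correctly aligned, the off-by-one gap between them is precisely what delivers the strict inequality.
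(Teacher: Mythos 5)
Your proof is correct and follows essentially the same route as the paper's: the same decomposition of $\SW$ into contributions from $K$, the boundary, and the rest for identity~\eqref{eqn:deltasw}, and the same splitting of $E[K]$ into $F$ and the forest $E[K]\setminus F$ combined with $\Delta\SW_K \geq |K|$ for inequality~\eqref{eqn:deltasw2}. The only cosmetic difference is that the paper carries the strictness via $|F^c| < |K|$ while you carry it via the explicit $+2$ slack from $|T| \leq |K|-1$; these are the same off-by-one observation.
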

\begin{proof}
  Let $N_K$ denote the set of neighbors of nodes in $K$ that are not
  in $K$.  Abbreviate $\SW(s')-\SW(s)$ to $\Delta \SW$, and
  analogously, for a coalition $L$, let $\Delta
  \SW_{L}=\SW_{L}(s')-\SW_{L}(s)$.  The change in the social welfare
  can be written as
$$
\Delta \SW = \Delta \SW_K + \Delta \SW_{N_K} + \Delta \SW_{V \setminus (K \cup N_K)}.
$$
We have
$$
\SW_K(s) = 2 |E^+_s \cap E[K]| + |E^+_s \cap \delta(K)|
$$
and analogously for $s'$. Thus
\begin{align*}
\Delta \SW_K =  2(|E^+_{s'} \cap E[K]| - |E^+_s \cap E[K]|) + |E^+_{s'} \cap \delta(K)| - |E^+_s \cap \delta(K)|.
\end{align*}
It follows that 
\begin{align*}
\Delta \SW_{N_K} &= |E^+_{s'} \cap \delta(K)| - |E^+_s \cap \delta(K)| \\ 
&= \Delta \SW_K - 2(|E^+_{s'} \cap E[K]| - |E^+_s \cap E[K]|).
\end{align*}
Furthermore, the payoff of the players that are neither in
$K$ nor in $N_K$ does not change and hence $\Delta \SW_{V \setminus (K \cup N_K)} = 0.$
Putting these equalities together, we obtain \eqref{eqn:deltasw}.

Let $F^c = E[K] \setminus F$. Then
\begin{align*}
  |E^+_s \cap E[K]| - |E^+_{s'} \cap E[K]|&= |E^+_s \cap F|  -|E^+_{s'} \cap F| + |E^+_s \cap F^c| - |E^+_{s'} \cap F^c|.
\end{align*}
We know that $(K, F^c)$ is a forest because $F$ is a feedback edge set. So $|F^c| < |K|.$ Hence
\begin{align*}
|E^+_s \cap F^c| - |E^+_{s'} \cap F^c| \geq - |F^c| >  - |K|.
\end{align*}
Furthermore, each player in $K$ improves his payoff when switching to
$s'$ and hence $\Delta \SW_K \geq |K|.$ So, plugging in these
inequalities in \eqref{eqn:deltasw} we get
$$
\Delta \SW > 2(|K| + |E^+_s \cap F| - |E^+_{s'} \cap F|  - |K|) = 2(|E^+_s \cap F| - |E^+_{s'} \cap F|),
$$
which proves \eqref{eqn:deltasw2}.
 \qed
\end{proof}

Let $\tau(K)$ be the size of a minimal feedback edge set of $G[K]$,
i.e.,
\begin{equation}\label{def:minFES}
\tau(K) = \min\{|F| \sut G[K] \setminus F \text{ is acyclic}\}.
\end{equation}
Equation \eqref{eqn:deltasw2} then yields that $\SW(s') - \SW(s) > - 2 \tau(K)$. 
The following example shows that this bound is tight.

\begin{example} \label{exa:mona} 
We define a graph $G = (V, E)$ and a color assignment as follows. 
Consider a clique on $l$ nodes and let $K$ be the set of nodes. 
Every $i \in K$ can choose between two colors $\{c_i, x\}$, where $c_i \neq c_j$ for every $j \neq i$. Further, every node $i \in K$ is adjacent to $(l-2)$ additional nodes of degree one, each of which has the color set $\{c_i\}$. Note that when defining a joint strategy $s$, it is sufficient to specify $s_i$ for every $i \in K$ because the remaining nodes have only one color to choose from. 

Let $s:=(c_i)_{i \in K}$ and $s' := (x)_{i \in K}$. Then $s \betredge{K} s'$ is a
profitable deviation because every node in $K$ increases its payoff
from $(l-2)$ to $(l-1).$ Also $ |E^+_{s}| = l(l-2)$ and $ |E^+_{s'}| =
\frac{l(l-1)}{2}$, so
\[
|E^+_s| - |E^+_{s'}| = l \left(l-2-\frac{l-1}{2} \right) = l \left( \frac{l-1}{2}-1 \right).
\]
Furthermore, each tree on $|K|$ nodes has $|K|-1$ edges. Thus
\[
\tau(K) = |E[K]| - (|K|-1) =  \frac{l(l-1)}{2}  - (l-1) = l \left( \frac{l-1}{2} -1 \right) + 1.
\]
So $\SW(s') - \SW(s) = 2(|E^+_{s'}| - |E^+_s|) = -2\tau(K) + 2$. Tightness follows because the left hand side is always even. 
\qed
\end{example}

\subsection{Color forests and pseudoforests} 

We use our key lemma to show that coordination games on pseudoforests admit strong equilibria. 
Recall that a \emph{pseudoforest} is a graph in which every connected component contains at most one cycle. For a color $x \in M$ let 
\[
V_x = \{i \in V \mid x \in A_i\}
\] 
be the set of nodes that can choose $x.$  If $G[V_x]$ is a forest for all $x \in M$, we call $G$ a \emph{color forest} (with respect to $A$). Note that, in particular, a forest constitutes a color forest.
Given a joint strategy $s$, we call a subgraph $G'$ of $G$ \emph{completely non-unicolored in $s$} if none of its edges is unicolored in $s$. 

We first derive some corollaries from our key lemma. Throughout this section, we consider a profitable deviation $s \betredge{K} s'$ and let $\Delta \SW = \SW(s') - \SW(s)$. 

\begin{corollary} \label{cor:no_unicolor_cycle}  
If $\Delta \SW \leq 0$, then there is a cycle $C$ in $G[K]$ that is completely non-unicolored in $s$ and unicolored in $s'$. 
\end{corollary}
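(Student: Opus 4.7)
The plan is to derive a contradiction by assuming no such cycle exists and applying inequality \eqref{eqn:deltasw2} of the key lemma to a carefully chosen feedback edge set.

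First, I would reformulate what is to be shown. A cycle $C$ in $G[K]$ is completely non-unicolored in $s$ and unicolored in $s'$ precisely when every edge of $C$ lies in the set $D := E[K] \cap E_{s'}^+ \setminus E_s^+$. So the goal is to show that the subgraph $(K, D)$ contains a cycle.

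Next, I would argue by contradiction: suppose $(K, D)$ is acyclic. Then $D$ is contained in some spanning forest $T$ of $G[K]$, and its complement $F := E[K] \setminus T$ is a feedback edge set of $G[K]$ satisfying $F \cap D = \emptyset$. The point of this choice is that, since no edge of $F$ lies in $D$, every edge of $F$ that is unicolored in $s'$ must already be unicolored in $s$; in other words,
\[
F \cap E_{s'}^+ \subseteq F \cap E_s^+, \qquad\text{so}\qquad |F \cap E_s^+| - |F \cap E_{s'}^+| \geq 0.
\]

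Finally, plugging this $F$ into inequality \eqref{eqn:deltasw2} of the key lemma yields $\Delta \SW > 2(|F \cap E_s^+| - |F \cap E_{s'}^+|) \geq 0$, contradicting the hypothesis $\Delta \SW \leq 0$. Hence $(K,D)$ must contain a cycle, which is the desired $C$. The only mildly subtle step is constructing the feedback edge set $F$ so as to exclude $D$; once one thinks of completing $D$ to a spanning forest, everything else is a direct application of the key lemma.
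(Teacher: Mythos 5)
Your proof is correct and takes essentially the same route as the paper's: argue by contradiction, produce a feedback edge set $F$ of $G[K]$ satisfying $F \cap E_{s'}^+ \subseteq F \cap E_s^+$, and apply inequality \eqref{eqn:deltasw2}. The only difference is cosmetic --- the paper builds $F$ by selecting from each cycle one edge that is unicolored in $s$ or non-unicolored in $s'$, whereas you complete the set of newly unicolored edges to a spanning forest and take the complement; both yield the same contradiction.
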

\begin{proof} 
Assume that the claim does not hold. Then for all cycles $C$ in $G[K]$, we can pick an edge $e_C \in E[C]$ that is unicolored in $s$ or non-unicolored in $s'$. Let
$
F = \{e_C \sut C \text{ is a cycle in } G[K]\}.
$
This is a feedback edge set satisfying $F \cap E_{s'}^+ \subseteq F \cap E_{s}^+$. Hence by \eqref{eqn:deltasw2},
$
\Delta \SW > 2(|E^+_s \cap F| - |E^+_{s'} \cap F|) \geq 0,
$
which is a contradiction.  
\qed
\end{proof}
The next statement follows immediately from Corollary \ref{cor:no_unicolor_cycle} because unicolored cycles cannot exist in color forests. Note that forests are a special case. 

\begin{theorem} \label{thm:color_forest} 
Suppose that $G[K]$ is a color forest. Then  $\Delta \SW  > 0.$ Hence every coordination game on a color forest has the c-FIP. \qed
\end{theorem}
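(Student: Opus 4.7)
The plan is to argue by contradiction, directly leveraging Corollary \ref{cor:no_unicolor_cycle}. Assume for contradiction that $s \betredge{K} s'$ is a profitable deviation with $\Delta \SW \leq 0$. Then the corollary supplies a cycle $C$ in $G[K]$ that is unicolored in $s'$ — say every node of $C$ has color $x$ under $s'$.

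Now I would spell out why this is impossible when $G[K]$ is a color forest. Since each node $i \in V(C)$ plays color $x$ in $s'$, each such node must have $x$ available, i.e., $V(C) \subseteq V_x \cap K$. Hence $C$ is a cycle living inside the induced subgraph on $V_x \cap K$, which is a subgraph of $G[V_x]$. But the color forest hypothesis says $G[V_x]$ (restricted to $K$) is a forest, contradicting the existence of $C$. This forces $\Delta \SW > 0$, as required.

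To conclude the c-FIP claim, I would observe that $\SW(s) = 2|E^+_s|$ takes values in the finite set $\{0, 2, 4, \dots, 2|E|\}$. Since every profitable coalitional deviation strictly increases $\SW$ by the above, the social welfare itself serves as a generalized ordinal c-potential, and every c-improvement path is finite. By the remarks in Section \ref{sec:prelim}, this yields the existence of a strong equilibrium in every coordination game on a color forest.

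There is no real obstacle here — the whole point of the key lemma and its corollary is to reduce such existence questions to the absence of certain unicolored cycles, and the color forest definition was tailored so that this absence is automatic. The only minor care is to verify that the unicolored cycle produced by the corollary sits inside some $G[V_x]$, which is immediate from the fact that a node can only play a color available to it.
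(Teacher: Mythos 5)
Your proof is correct and follows exactly the paper's route: the paper also derives the result immediately from Corollary \ref{cor:no_unicolor_cycle}, noting that a unicolored cycle would lie inside some $G[V_x]$, which is a forest by hypothesis, and then uses $\SW$ as a generalized ordinal c-potential for the c-FIP claim. Your write-up merely makes explicit the (one-line) verification that the unicolored cycle sits in $G[V_x]$, which the paper leaves implicit.
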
 

\begin{corollary} \label{cor:one-cycle}
If $G[K]$ is a graph with at most one cycle, then $\Delta \SW  \geq 0.$  
\end{corollary}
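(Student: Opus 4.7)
The plan is to deduce the bound directly from inequality \eqref{eqn:deltasw2} of the Key Lemma by choosing a minimum feedback edge set. Since $G[K]$ contains at most one cycle, I can pick a feedback edge set $F$ of size at most $1$: if $G[K]$ is acyclic, take $F = \emptyset$; otherwise let $F = \{e\}$ for any edge $e$ lying on the unique cycle of $G[K]$.

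With this choice, the quantity $|F \cap E_s^+| - |F \cap E_{s'}^+|$ is trivially bounded below by $-|F| \geq -1$. Plugging this into \eqref{eqn:deltasw2} yields
\[
\Delta \SW \;>\; 2\bigl(|F \cap E_s^+| - |F \cap E_{s'}^+|\bigr) \;\geq\; -2.
\]

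The key observation that closes the argument is a parity remark: since $\SW(s) = 2|E_s^+|$ for every joint strategy $s$, the difference $\Delta \SW$ is always an even integer. A strict inequality $\Delta \SW > -2$ between even integers therefore upgrades to $\Delta \SW \geq 0$, which is the desired conclusion.

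I expect no serious obstacle here; the statement is essentially a one-line corollary of the Key Lemma, with the only subtlety being the parity step. A slightly different route would be to invoke Corollary~\ref{cor:no_unicolor_cycle}: if $\Delta \SW$ were negative, there would exist a completely non-unicolored-in-$s$, unicolored-in-$s'$ cycle in $G[K]$, which together with the even-parity of $\Delta \SW$ could again be pushed to a contradiction; but the direct feedback-edge-set computation above is shorter and more in line with how the preceding results are presented.
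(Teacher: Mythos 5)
Your proof is correct and follows essentially the same route as the paper: choose a feedback edge set of size at most $1$, apply inequality \eqref{eqn:deltasw2} to get $\Delta \SW > -2$, and use the evenness of $\Delta \SW$ (from $\SW = 2|E_s^+|$) to conclude $\Delta \SW \geq 0$. Your write-up is in fact slightly more explicit than the paper's in handling the acyclic case separately.
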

\begin{proof}
  If $G[K]$ is a connected graph with exactly one cycle, then there is a feedback edge set of size $1$. Hence $\Delta \SW > -2$. Because the left hand side
  is even, this implies $\Delta \SW \geq 0.$
\qed
\end{proof}

Using Corollary \ref{cor:no_unicolor_cycle} and \ref{cor:one-cycle},
we now establish the following result.

\begin{theorem} \label{thm:pseudoforest}
  Every coordination game on a pseudoforest has the c-FIP. 
\end{theorem}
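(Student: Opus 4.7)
The plan is to exhibit a generalized ordinal c-potential taking values in a finite, lexicographically ordered set and then invoke the folklore result from Section~\ref{sec:prelim}. Since $G$ is a pseudoforest, every connected component of $G$ contains at most one cycle; let $\mathcal{C}$ denote the collection of these cycles. I would define $c(s) := |\{C \in \mathcal{C} \sut C \text{ is unicolored in } s\}|$ and set $P(s) := (\SW(s), c(s))$ under the lex order.

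The first step is a component-wise reduction. For a profitable deviation $s \betredge{K} s'$ and any connected component $H$ of $G$, let $K_H := K \cap V(H)$ and let $s^{(H)}$ coincide with $s'$ on $K_H$ and with $s$ elsewhere. Because payoffs are local and every neighbor of a node in $K_H$ lies in $V(H)$, the sub-deviation $s \betredge{K_H} s^{(H)}$ is profitable whenever $K_H \neq \emptyset$ and produces the same change in $\SW_{V(H)}$ as the original deviation does in total social welfare of $V(H)$. Since $G[K_H] \subseteq H$ has at most one cycle, Corollary~\ref{cor:one-cycle} yields $\Delta \SW_{V(H)} \geq 0$, and summing over components gives $\Delta \SW \geq 0$.

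The second step verifies that $P$ strictly increases under every profitable deviation. If $\Delta \SW > 0$, the first coordinate already witnesses this. Otherwise $\Delta \SW = 0$, and then the non-negativity of each component-wise increment forces $\Delta \SW_{V(H)} = 0$ for every $H$ with $K_H \neq \emptyset$. For such an $H$, applying Corollary~\ref{cor:no_unicolor_cycle} to $s \betredge{K_H} s^{(H)}$ supplies a cycle in $G[K_H]$ that is completely non-unicolored in $s$ and unicolored in $s^{(H)}$ (and hence in $s'$). Since $G[K_H]$ contains at most one cycle, this cycle must be the unique cycle of $H$, which therefore flips from non-unicolored to unicolored. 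Components with $K_H = \emptyset$ undergo no strategy change and cannot lose any unicolored cycle, so $c(s') \geq c(s) + 1$ and $P(s') \succ P(s)$. Because $P$ takes only finitely many values, it is a generalized ordinal c-potential and the c-FIP follows.

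The main obstacle is the component-wise bookkeeping: verifying that the sub-deviation $s \betredge{K_H} s^{(H)}$ genuinely inherits profitability and the exact $\SW_{V(H)}$-contribution from the original deviation, so that Corollaries~\ref{cor:one-cycle} and~\ref{cor:no_unicolor_cycle} can be applied component by component. Once this reduction is set up, the two-tier lex potential $(\SW, c)$ is essentially forced by the pseudoforest structure.
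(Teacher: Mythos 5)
Your proposal is correct and follows essentially the same route as the paper: the same lexicographic potential $P(s) = (\SW(s), c(s))$, the same decomposition of the deviation over connected components (each inducing a subgraph with at most one cycle), and the same appeals to Corollary~\ref{cor:one-cycle} for $\Delta \SW \geq 0$ and Corollary~\ref{cor:no_unicolor_cycle} to show that in the $\Delta \SW = 0$ case each affected component's unique cycle flips from completely non-unicolored to unicolored. Your explicit verification that the component-wise sub-deviations inherit profitability and the exact welfare increments is a welcome elaboration of a step the paper leaves implicit.
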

\begin{proof}
Associate with each joint strategy $s$ the pair
\[
P(s): = \left(\SW(s),\ |\{ C \mid \text{$C$ is a unicolored cycle in $s$} \} | \right).
\]
We now claim that
  $P: S \to \mathbb{R}^2$ is a generalized ordinal c-potential
  when we take for the strict partial ordering $\succ$ on $P(S)$
the lexicographic ordering.

Consider a profitable deviation $s \betredge{K}
s'$. By partitioning $K$ into the subsets of different
connected components we can decompose this deviation into a sequence of profitable deviations such that each
deviating coalition induces a subgraph of a connected graph with at most one cycle.
By Corollary \ref{cor:one-cycle} the social
welfare in each of these profitable deviations weakly increases.  So
$\SW(s') \geq \SW(s)$. 

If $\SW(s') > \SW(s)$ then $P(s') \succ P(s)$.  If $\SW(s') = \SW(s)$, then by Corollary \ref{thm:color_forest} each of these profitable deviations is by a coalition that induces a connected graph with exactly one cycle. Moreover, this cycle becomes unicolored in $s'$. Thus $P(s') \succ P(s)$.
\qed
\end{proof}

\subsection{Further applications}
The following corollary is an immediate consequence of Corollary~\ref{thm:color_forest}. 

\begin{corollary} \label{cor:2-eq}
In every coordination game, every sequence of profitable deviations of coalitions of size at most two is finite. Hence Nash equilibria and 2-equilibria always exist.
\end{corollary}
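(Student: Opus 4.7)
The plan is to reduce Corollary~\ref{cor:2-eq} to an immediate application of Theorem~\ref{thm:color_forest}. The observation driving the reduction is that any coalition $K$ with $|K|\le 2$ induces a subgraph $G[K]$ on at most two nodes and therefore on at most one edge. Such a subgraph is trivially acyclic, so $G[K]$ is a forest, and a fortiori a color forest with respect to any color assignment $A$ (since for each color $x$, $G[K\cap V_x]$ is a subgraph of a forest and hence again a forest).

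Given this, the first step is to consider any profitable deviation $s \betstep{K} s'$ with $|K|\le 2$. By the observation above, $G[K]$ is a color forest, so Theorem~\ref{thm:color_forest} applies and yields $\Delta\SW = \SW(s')-\SW(s) > 0$. Thus every profitable deviation by a coalition of size at most two \emph{strictly} increases the social welfare.

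The second step is to derive finiteness. Since the game has finitely many joint strategies, the social welfare function $\SW$ takes only finitely many values, so any strictly $\SW$-increasing sequence of joint strategies must terminate. Consequently, every sequence of profitable deviations in which each deviating coalition has size at most two is finite. Equivalently, the function $\SW$ itself serves as a generalized ordinal potential for the restricted dynamics where coalitions are of size at most two.

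The final step is immediate: take any such maximal sequence (e.g.\ starting from an arbitrary joint strategy); its last element is a joint strategy from which no coalition of size at most two can profitably deviate, i.e.\ a $2$-equilibrium, which in particular is a Nash equilibrium. There is no real obstacle here; the entire content of the corollary is the observation that subgraphs on $\le 2$ vertices are (color) forests, after which Theorem~\ref{thm:color_forest} does all the work.
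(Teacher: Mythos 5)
Your proof is correct and is exactly the argument the paper intends: the paper states the corollary as an immediate consequence of Theorem~\ref{thm:color_forest}, the point being precisely that a coalition of size at most two induces an acyclic (hence color-forest) subgraph, so every such profitable deviation strictly increases $\SW$, which is bounded. Nothing to add.
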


\begin{corollary} 
Every coordination game in which at most two colors are used has the c-FIP.
\end{corollary}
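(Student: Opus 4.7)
The plan is to apply the Key Lemma (Lemma~\ref{lem:tau}) to show that in any profitable deviation, the social welfare strictly increases; since $\SW$ takes only finitely many values on $S$, this gives the c-FIP with $\SW$ itself as a generalized ordinal c-potential.

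Fix a profitable deviation $s \betredge{K} s'$ in a coordination game that uses at most two colors, say $M' = \{a, b\}$. The first observation is that every player $i \in K$ must simply swap one color for the other: by definition of a coalitional deviation, $s_i \neq s'_i$ and $s_i, s'_i \in A_i \subseteq \{a,b\}$, so $\{s_i, s'_i\} = \{a,b\}$. Consequently, for every edge $\{i,j\} \in E[K]$ both endpoints flip their colors, and therefore
\[
s_i = s_j \quad \Longleftrightarrow \quad s'_i = s'_j,
\]
which means $|E^+_{s'} \cap E[K]| = |E^+_s \cap E[K]|$.

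Plugging this equality into equation \eqref{eqn:deltasw} of the Key Lemma yields
\[
\Delta \SW \;=\; 2\bigl(\Delta \SW_K - |E^+_{s'} \cap E[K]| + |E^+_s \cap E[K]|\bigr) \;=\; 2\,\Delta \SW_K.
\]
Because the deviation is profitable, each of the $|K|$ players in $K$ strictly improves her payoff, so $\Delta \SW_K \geq |K| \geq 1$, and hence $\Delta \SW \geq 2 > 0$.

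Thus $\SW$ strictly increases along every profitable joint deviation. Since the set of joint strategies is finite, $\SW$ takes only finitely many values, and so every c-improvement path must terminate. This establishes the c-FIP (with $\SW$, ordered by the usual $>$ on $\mathbb{R}$, serving as a generalized ordinal c-potential). No real obstacle arises in the argument; the only point worth flagging is the observation that with only two colors, a coalitional deviation is a simultaneous \emph{flip} of every member of $K$, which is exactly what forces the cancellation of the $E[K]$-terms in \eqref{eqn:deltasw}.
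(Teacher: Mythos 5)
Your proof is correct. The crucial observation is the same one the paper uses: with only two colors, every member of $K$ must flip to her unique other option, so every edge of $E[K]$ is unicolored in $s'$ exactly when it is unicolored in $s$. Where you diverge is in how you cash this in. The paper feeds the observation into Corollary~\ref{cor:no_unicolor_cycle} (itself derived from inequality \eqref{eqn:deltasw2} of the Key Lemma): if $\Delta\SW\le 0$ there would have to be a cycle in $G[K]$ that is completely non-unicolored in $s$ yet unicolored in $s'$, which the flip property forbids. You instead substitute the equality $|E^+_{s'}\cap E[K]|=|E^+_s\cap E[K]|$ directly into identity \eqref{eqn:deltasw}, getting $\Delta\SW = 2\,\Delta\SW_K \ge 2|K| > 0$ from integrality of the payoffs. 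Your route is slightly more elementary (no reasoning about cycles or feedback edge sets is needed once the cancellation is noted) and yields the sharper quantitative bound $\Delta\SW\ge 2|K|$; the paper's route is a one-line application of machinery it has already set up for the pseudoforest and color-forest results. Both correctly conclude that $\SW$ is a generalized ordinal c-potential, hence the c-FIP.
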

\begin{proof}
  Let $s \betredge{K} s'$ be a profitable deviation. By assumption, all players in $K$ then deviate to their other option. As a consequence, every edge in $E[K]$ is unicolored in $s'$ if and only if it is unicolored in $s$. Hence
  each cycle in $G[K]$ that is unicolored in $s'$ is also unicolored in
  $s.$ It follows from Corollary \ref{cor:no_unicolor_cycle} that
  $\SW(s') > \SW(s)$. This shows that $\SW$ is a generalized ordinal
  c-potential.
  \qed
\end{proof}

The existence of strong equilibria for coordination
games with two colors and symmetric strategy sets follows from
Proposition 2.2 in \cite{KBW97a}. Corollary~\ref{cor:2-eq} shows that
a stronger result holds, namely that these games have the c-FIP. This
implies that arbitrary coalitional improvement paths always converge
to a strong equilibrium.

We next derive an existence result of $k$-equilibria in graphs in which every pair of cycles is edge-disjoint. We call an edge $e$ of a graph \emph{private} if it belongs to a cycle and is node-disjoint from all other cycles.

\begin{lemma} \label{lem:disj_cycles}
Let $G$ be a graph in which every pair of cycles is edge-disjoint.
Then there exists a private edge.
\end{lemma}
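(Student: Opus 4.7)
The plan is to exploit the cactus structure of $G$ implied by the hypothesis. Since every pair of cycles is edge-disjoint, every edge of $G$ lies on at most one cycle, and so each block of $G$ (that is, a maximal $2$-vertex-connected subgraph, or a bridge) is either a single edge or a simple cycle. To justify this, I would argue that any $2$-vertex-connected subgraph that is not itself a cycle contains a theta subdivision (via a short ear-decomposition argument), and the three cycles of any theta graph pairwise share edges, contradicting the hypothesis.

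Assuming $G$ contains at least one cycle, I would then consider the block-cut tree $T$ of $G$. The goal is to locate a ``peripheral'' cycle block $C$ such that at most one vertex of $C$ lies on any other cycle of $G$. To do this, I would root $T$ at an arbitrary cycle-block node and choose a cycle block $C$ at maximum depth. By the maximality of the depth, every block lying in the subtree below $C$ must be a bridge; consequently, any cut vertex of $C$ other than the one on the path to the root leads only into a cycle-free part of $G$. Hence every vertex of $C$ distinct from the ``parent'' cut vertex $v^{*}$ lies in no block other than $C$, and therefore on no cycle other than $C$.

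Since $C$ has at least three edges and at most two of them are incident to $v^{*}$, there is at least one edge $uv \in E(C)$ with $u, v \neq v^{*}$. Both $u$ and $v$ then lie only on $C$, so $uv$ is private, as required. The degenerate case in which $G$ has exactly one cycle is immediate, since any edge of that cycle is vacuously node-disjoint from all other cycles.

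The main obstacle I anticipate is carrying out the cactus-structure step cleanly, that is, verifying that a $2$-vertex-connected block containing more than one cycle necessarily contains a theta subgraph whose three cycles pairwise share edges. Once that is in hand, the rest of the proof is a straightforward traversal of the block-cut tree.
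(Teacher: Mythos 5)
Your argument is correct, but it reaches the conclusion by a genuinely different route than the paper. The paper works directly with cycles: it calls a vertex of a cycle $C$ an \emph{anchor point} if it is reachable from another cycle without using edges of $C$, and shows by contradiction that some cycle has at most one anchor point --- assuming every cycle has two, it builds a walk hopping from cycle to cycle via shortest connecting paths until it revisits a cycle, producing a new cycle that shares an edge with an old one. You instead invoke the block structure: edge-disjointness of cycles forces every block to be a bridge or a simple cycle (via the theta-subdivision observation, which is sound --- any two of the three cycles of a theta share all edges of the common path, and Whitney's open ear decomposition yields a theta in any $2$-connected non-cycle block), and then a deepest cycle block in the block-cut tree is peripheral in the required sense. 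Both proofs finish identically, by choosing an edge of the peripheral cycle avoiding its one distinguished vertex (your parent cut vertex $v^*$ plays exactly the role of the paper's unique anchor point), and both handle the single-cycle case trivially. Your version is more modular and leans on standard structural graph theory, which makes the ``peripheral cycle exists'' step essentially automatic once the cactus characterization is in place; the paper's version is self-contained and avoids setting up blocks and block-cut trees, at the price of a somewhat delicate ad hoc traversal argument. The only point you should spell out when writing this up is the cactus step you yourself flag, plus the small observation that if the deepest cycle block is the root then $G$ has a unique cycle block and the claim is immediate; neither is a gap.
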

\begin{proof}
Given a cycle $C$, we call a node $v \in V(C)$ an \emph{anchor point} of $C$ if $v$ can be reached from a node $v' \in V(C')$ of another cycle $C' \neq C$ without traversing an edge in $E(C)$. First we show that there always exists a cycle with at most one anchor point. Assume that the claim does not hold. Then every cycle $C$ of $G$ contains at least two distinct anchor points. Fix an arbitrary cycle $C$ of $G$ and let $v^1_C$ and $v^2_C$ be two anchor points of $C$. Start from $v^1_C$ and traverse the edges of $C$ to reach $v^2_C$. Then follow a shortest path $P$ that connects $v^2_C$ to a node $v^1_{C'} \in V(C')$ of another cycle $C' \neq C$; $P$ must exist because $v^2_C$ is an anchor point. 

Note that $C$ and $C'$ share at most one node because all cycles are edge-disjoint; in particular, $P$ might have length zero and consist of a single node only. Because we choose a shortest path connecting $C$ and $C'$, $v^1_{C'}$ must be an anchor point of $C'$. By assumption, $C'$ has another anchor point $v^2_{C'}$. Repeat the above procedure with cycle $C'$ and anchor points $v^1_{C'}$ and $v^2_{C'}$. Continuing this way, we construct a path that traverses cycles of $G$. Eventually, this path must return to a previously visited cycle. So this path contains a cycle and this cycle shares at least one edge with one of the visited cycles. This contradicts the assumption that all cycles of $G$ are pairwise edge-disjoint.

Now, let $C$ be a cycle with at most one anchor point $v$ (if no such node exists, any edge in $E(C)$ is private). Then any edge $e \in E(C)$ such that $v$ is not an endpoint of $e$ is private.
\qed
\end{proof}

\begin{theorem}\label{thm:edge-disjoint}
Consider a coordination game on a graph $G$ in which every pair of cycles is edge-disjoint. Let $k$ be the minimum length of a cycle in $G$. Then every sequence of profitable deviations of coalitions of size at most $k$ is finite. In particular, the game has a 3-equilibrium. 
\end{theorem}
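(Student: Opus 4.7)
The plan is to adapt the proof of Theorem \ref{thm:pseudoforest} for pseudoforests, using the lexicographic potential $P(s) = (\SW(s), U(s))$, where $U(s)$ denotes the number of unicolored cycles in $G$.

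First I would establish the following structural claim: for any coalition $K$ with $|K| \leq k$, the induced subgraph $G[K]$ contains at most one cycle, which must be Hamiltonian on $K$ of length exactly $k$ whenever it exists. This is a girth argument. Any cycle in $G[K]$ is a cycle in $G$, hence has length at least $k$, while $|V(G[K])| = |K| \leq k$ forces such a cycle to be Hamiltonian. Moreover, a $k$-vertex graph of girth $\geq k$ can contain only one Hamiltonian cycle: adding any chord to a $k$-cycle on $k$ vertices would create a strictly shorter cycle, contradicting the girth.

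Given this, Corollary \ref{cor:one-cycle} applies directly and yields $\Delta \SW \geq 0$ for every profitable deviation with $|K| \leq k$. If $\Delta \SW > 0$, then $P$ strictly increases via its first coordinate. Otherwise $\Delta \SW = 0$, and by Corollary \ref{cor:no_unicolor_cycle} the unique cycle $C$ contained in $G[K]$ is completely non-unicolored in $s$ and becomes unicolored in $s'$, which contributes $+1$ to $U$. To conclude $P(s') \succ P(s)$, one must argue $U(s') > U(s)$: no other unicolored cycle is destroyed on net. Here the edge-disjoint assumption is crucial, as it forces any other cycle $C'$ sharing vertices with $K$ to satisfy $E(C') \cap E[K] = \emptyset$ (since $E[K] = E(C)$ and cycles are edge-disjoint), so $C'$ intersects $K$ only in an independent set of $C'$. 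Combined with the tight constraints of a $\Delta \SW = 0$ deviation (each coalition member gains exactly $1$, and $C$ is completely non-unicolored beforehand), this should pin down enough structure to prevent a net loss of unicolored cycles.

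The main obstacle is precisely this last step. Cycles in $G$ may share vertices even under the edge-disjoint assumption, so a coalition whose members lie in several cycles could in principle disturb other unicolored cycles; making the potential $(\SW, U)$ strictly increase in the $\Delta\SW = 0$ case thus requires a careful combinatorial analysis. A natural way to carry it out is via induction on the cycle structure of $G$, using Lemma \ref{lem:disj_cycles} repeatedly to peel off cycles containing a private edge, so that the argument reduces to a base case analogous to the pseudoforest one. Once the lexicographic potential is established, every sequence of profitable deviations of coalitions of size at most $k$ is finite, and since $k \geq 3$ the existence of a $3$-equilibrium follows as an immediate consequence.
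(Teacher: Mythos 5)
There is a genuine gap, and you have located it yourself: the lexicographic potential $P(s) = (\SW(s), U(s))$ does not work for this theorem, and your proposal never closes the case $\Delta\SW = 0$. Your structural observations are correct and match the paper (any cycle of $G[K]$ with $|K|\leq k$ is Hamiltonian on $K$ of length exactly $k$, $G[K]$ has at most one cycle, hence $\Delta\SW \geq 0$; the paper gets the same inequality even more directly from \eqref{eqn:deltasw} via $|E[K]|\leq |K|$). But in the $\Delta\SW = 0$ case the deviating coalition is exactly $K = V(C)$ for a cycle $C$ of length $k$, and \emph{every} vertex of $K$ changes color; any other cycle $C'$ that is unicolored in $s$ and shares a vertex with $C$ (which edge-disjointness permits) then loses its unicoloring at that shared vertex. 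So the newly unicolored $C$ can be paid for by the destruction of one or more previously unicolored cycles, and nothing in your argument rules out $U(s')\leq U(s)$. Your appeal to ``the tight constraints of a $\Delta\SW=0$ deviation'' is exactly the step that would need a proof, and it is not supplied.

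The paper avoids the potential $U$ entirely and instead argues by induction on the number of cycles of $G$, which is the route you gesture at in your last paragraph but do not execute. Concretely: in an infinite sequence of profitable deviations the social welfare eventually stabilizes, after which Corollary \ref{cor:no_unicolor_cycle} forces each deviating coalition $K_i$ to equal $V(C_i)$ for a cycle $C_i$ that flips from completely non-unicolored in $s_i$ to unicolored in $s_{i+1}$. Lemma \ref{lem:disj_cycles} supplies a cycle $C$ with a private edge $e=\{u_1,u_2\}$, whose endpoints lie on no other cycle; hence $u_j\in K_i$ forces $C_i = C$. The key counting step is that $C = C_i$ can happen at most once: after the first such step $e$ is unicolored, and since no later coalition strictly between two occurrences of $C$ contains $u_1$ or $u_2$, the edge $e$ would still be unicolored when $C$ is supposed to be completely non-unicolored again --- a contradiction. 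Therefore $u_1,u_2$ eventually never deviate, $e$ can be deleted without affecting the remaining tail of the sequence, and the induction hypothesis on the graph with one fewer cycle finishes the proof. This ``each private-edge cycle fires at most once, then delete its private edge'' argument is the missing content; without it (or a correct substitute for the $U(s')>U(s)$ claim) your proposal does not establish finiteness.
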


\begin{proof}
We proceed by induction on the number $z$ of cycles. If $z = 1$, then the claim follows by Theorem~\ref{thm:pseudoforest}. 

Now, let $z > 1$. Let $s \betredge{K} s'$ be a profitable deviation such that $|K| \leq k$. From \eqref{eqn:deltasw} we infer that  
$$
\Delta \SW = 2 (\Delta \SW_K - |E^+_{s'} \cap E[K]| + |E^+_s \cap E[K]|)
\ge 2 (\Delta \SW_K - |E[K]|)
$$
because $|E^+_{s'} \cap E[K]| \le |E[K]|$.
Because $k$ is the minimum length of a cycle in $G$ and $|K| \leq k$, we have $|E[K]| \leq |K|.$ So $\Delta \SW \geq 0$.

Consider a sequence of profitable deviations $s_1 \betredge{K_1} s_2 \betredge{K_2} s_3 \ldots$ We show that it is finite. Because the social welfare cannot decrease and is upper bounded there is an index $l \ge 1$ such that for all $i\geq l$, $\SW(s_i) = \SW(s_l).$ We can assume without loss of generality that $l=1$. By Corollary \ref{cor:no_unicolor_cycle}, for each $i \geq 1$ there is a cycle $C_i$ in $G[K_i]$ such $C_i$ is completely non-unicolored in $s_i$ and unicolored in $s_{i+1}.$ Note that $k \leq |V(C_i)| \leq |K_i| \leq k$ and hence $K_i = V(C_i).$

By Lemma \ref{lem:disj_cycles}, there is a cycle $C$ with a private edge $e =\{u_1, u_2\} \in E(C)$. We claim that $C = C_i$ for at most one $i.$ Assume otherwise and let $i_{1}, i_{2}$ such that $C = C_{i_1} = C_{i_2}$ and $C \neq C_i$ for $i_1 < i < i_2$.  Because $C = C_{i_1}$, $e$ is unicolored in $s_{i_1+1}$. We know that $C$ is the only cycle containing $u_j$ for $j=1,2$ by choice of $e$. So $u_j \notin K_i$ for $i_1 < i < i_2$ and hence $e$ is still unicolored in $i_2$. But $C$ switches from completely non-unicolored in $s_{i_2}$ to unicolored in $s_{i_2 + 1}$, a contradiction.

Since $C$ is the only cycle containing $u_j$ for $j = 1, 2$, it follows that each $u_j$ can appear at most once in a deviating coalition. So there is an index $l$ such that $u_1, u_2 \notin K_i$ for all $i>l$. Hence if we remove $e$ and call the new graph $G'$, then for all $i>l$, $s_{i} \betredge{K_i} s_{i+1}$ is a profitable deviation in $G'$. Because $G'$ has one cycle less than $G$, we can apply the induction hypothesis and conclude that the considered sequence of profitable deviations is finite.
\qed
\end{proof}

\subsection{Uniform coordination games}
\label{sec:uniform}

Next, we establish the c-FIP property for some additional classes of coordination games. We call a coordination game on a graph $G$ \emph{uniform} if for every joint strategy $s$ and for every edge $\{i, j\} \in E$ it holds that if $s_i = s_j$ then $p_i(s) = p_j(s)$. 

\begin{theorem}\label{thm:consistent}
Every uniform coordination game has the c-FIP.
\end{theorem}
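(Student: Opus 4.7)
The plan is to show that social welfare $\SW$ is a generalized ordinal c-potential for every uniform coordination game, from which the c-FIP follows. I argue by contradiction: suppose $s \betredge{K} s'$ is a profitable deviation with $\Delta \SW \le 0$. By Corollary~\ref{cor:no_unicolor_cycle} there is then a cycle $C$ in $G[K]$ that is completely non-unicolored in $s$ but unicolored in $s'$ with some color $c$.

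This is precisely the configuration that uniformity is tailored to forbid. Since $C$ is unicolored in $s'$, each edge of $C$ satisfies the premise of uniformity in $s'$; iterating along $C$ yields that every $v \in V(C)$ shares a common payoff $p := p_v(s')$ in $s'$. Because $V(C) \subseteq K$ and the deviation is profitable, $p_v(s) \le p-1$ for each $v \in V(C)$, and hence $\Delta \SW_{V(C)} \ge |V(C)|$. The plan is to plug this strengthened lower bound into equation~\eqref{eqn:deltasw}: the edges of $C$ contribute $|V(C)|$ new unicolored edges to $E^+_{s'} \cap E[K]$ absent from $E^+_s \cap E[K]$, which in the generic bookkeeping costs $-2|V(C)|$, but the uniformity-enforced $+2|V(C)|$ bonus on $V(C)$ cancels it; combined with the strict slack in the feedback-edge-set argument of \eqref{eqn:deltasw2}, one obtains $\Delta \SW > 0$, contradicting the assumption.

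The hard part will be handling the extra edges of $G[K]$ outside $E(C)$: chords of $V(C)$, other cycles in $G[K]$, and unicolored edges on $\delta(K)$ that may be destroyed by the deviation. The intended cure is to apply uniformity more symmetrically, on every color-connected component touched by the deviation (in either $s$ or $s'$): payoffs are coupled along such components, so each lost unicolored edge on the boundary is tied to a compensating gain somewhere inside $K$. If a direct $\SW$-only argument turns out to be insufficient, a clean backup is the refined potential $P(s) := (\SW(s), -\mu(s))$ with lexicographic order, where $\mu(s)$ is the number of color-connected components of $s$; in the flat-welfare case, the unicoloring of the non-unicolored cycle $C$ collapses at least two distinct color-components of $s$ (which must exist because consecutive vertices of $C$ carry different colors in $s$) into a single one in $s'$, providing the required strict tiebreaker.
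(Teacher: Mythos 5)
Your primary plan---showing that $\SW$ itself is a generalized ordinal c-potential for uniform games---cannot work, because the underlying claim is false. Take the complete graph on six players with color sets $A_1=\{a\}$, $A_2=A_3=\{a,c\}$, $A_4=\{b\}$, $A_5=A_6=\{b,c\}$. This game is uniform (on a complete graph, $p_i(s)$ depends only on the size of the color class of $s_i$, so any two players sharing a color have equal payoffs); it is even color complete. Let $s=(a,a,a,b,b,b)$, so every player has payoff $2$, and let $K=\{2,3,5,6\}$ deviate to $c$, giving $s'=(a,c,c,b,c,c)$. Each deviator's payoff rises from $2$ to $3$, so the deviation is profitable, yet $\SW(s)=\SW(s')=12$. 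This also refutes your claim that uniformity ``forbids'' the configuration of Corollary~\ref{cor:no_unicolor_cycle}: the cycle $2\text{--}5\text{--}3\text{--}6\text{--}2$ is completely non-unicolored in $s$ and unicolored in $s'$. Your bookkeeping around \eqref{eqn:deltasw} fails quantitatively in this instance: $\Delta \SW_K = 4 = |V(C)|$, while $|E^+_{s'}\cap E[K]|-|E^+_s\cap E[K]| = 6-2 = 4$ as well, so the two contributions cancel exactly and there is no ``strict slack'' left over. (This is also why the paper's strong-price-of-stability theorem lists pseudoforests, color forests and two colors, but not uniform or color-complete games.)

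Your backup potential $(\SW(s),-\mu(s))$ is only a sketch, and its key step is not established: you argue that unicoloring $C$ merges at least two color components of $s$, but you do not account for the fact that the same deviation can also break up components elsewhere---players left behind by the coalition lose unicolored edges on $\delta(K)$ and inside $E[K]$, which can increase the component count (in the example above, players $1$ and $4$ are orphaned from their former components). Whether the net effect on $\mu$ has the right sign therefore requires a genuine argument that you have not given, and the answer is sensitive to how $\mu$ counts singletons. The paper avoids $\SW$ altogether: it takes as potential the vector of payoffs sorted in decreasing order, with the lexicographic ordering, and shows directly that a profitable deviation in a uniform game strictly increases this vector---if some player $i\notin K$ is the first (in sorted order) whose payoff drops, then a neighbor $j\in K$ with $s_j=s_i$ must have left, and uniformity gives $p_j(s)=p_i(s)$, so $p_j(s')>p_i(s)$, contradicting the choice of $i$. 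That argument is short, uses uniformity exactly once, and is immune to the welfare-preserving deviations that break your approach; I would recommend abandoning the $\SW$-based route entirely.
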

\begin{proof}
Given a sequence $\theta \in \mathbb{R}^n$ of reals we denote by $\theta^*$ its reordering from the largest to the smallest element. Associate with each joint strategy $s$ the sequence $(p_1(s), \dots, p_n(s))^{*}$ that we abbreviate to $p^{*}(s)$.  We now claim that $p^{*}: S \to \mathbb{R}^n$ is a generalized ordinal c-potential when we take for the partial ordering $\succ$ on $p^{*}(S)$ the lexicographic ordering on the sequences of reals.

Suppose that some coalition $K$ profitably deviates from the joint strategy $s$ to $s' = (s'_K, s_{-K})$.  We claim that then $p^{*}(s') \succ p^{*}(s)$.

Assume this does not hold. Rename the players such that $p^*(s') = (p_1(s'), \dots, p_n(s'))$. Let $i$ be the smallest value for which $p_{i}(s') < p_{i}(s)$. By assumption such an $i$ exists. By the choice of $i$ for all $j < i$ we have $p_{j}(s') \geq p_{j}(s)$ and also $p_{j}(s') \geq p_{i}(s')$.

Now, $p_{i}(s') < p_{i}(s)$ implies that $i \not\in K$ and hence we can write $s' = (s_i, s'_{-i})$. By the definition of the payoff functions, it follows that there exists some neighbor $j$ of $i$ with $s_j = s_i$ and $s'_j \neq s_j$. Thus, $j \in K$. By the uniformity property, $p_i(s) = p_j(s)$. So $p_j(s') > p_j(s) = p_i(s)$. Consequently, by the choice of $i$, we have $p^{*}(s') \succ p^{*}(s)$, which is a contradiction.
\qed
\end{proof}

We can capture by Theorem~\ref{thm:consistent} the following class of coordination games: We say that $G$ is \emph{color complete} (with respect to $A$) if for every $x \in M$ each component of $G[V_x]$ is complete. (Recall that $V_x = \{i \in V \mid x \in A_i\}$.)

\begin{corollary}
\label{cor:color_complete}
Every coordination game on a color complete graph has the c-FIP. In particular, every coordination game on a complete graph has the c-FIP.
\end{corollary}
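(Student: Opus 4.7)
The plan is to reduce the corollary to Theorem~\ref{thm:consistent} by showing that every coordination game on a color complete graph is uniform. Once uniformity is established, the c-FIP follows immediately. The second part (complete graphs) is then a one-line consequence, since in a complete graph $G$ every induced subgraph $G[V_x]$ is itself complete, hence trivially a disjoint union of complete components.

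To prove uniformity, I would fix a joint strategy $s$ and an edge $\{i,j\} \in E$ with $s_i = s_j$, and set $x := s_i = s_j$. Because $i$ and $j$ play $x$, both lie in $V_x = \{k \in V \mid x \in A_k\}$, and since $\{i,j\} \in E$ they are adjacent in $G[V_x]$, so they lie in the same connected component $C$ of $G[V_x]$. By the color complete assumption, $C$ is a complete graph.

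The key observation is then that the payoffs of $i$ and $j$ under $s$ can both be written in terms of $C$ alone. Concretely, I would argue that for $i$ the set of $x$-colored neighbors equals $\{k \in V(C) \setminus \{i\} \mid s_k = x\}$: the inclusion $\subseteq$ holds because any $x$-colored neighbor of $i$ must belong to $V_x$ and to the component of $G[V_x]$ containing $i$, namely $C$; the inclusion $\supseteq$ holds because $C$ is a clique, so every node of $V(C) \setminus \{i\}$ is a neighbor of $i$ in $G$. The same identity holds for $j$ with $\{i\}$ replaced by $\{j\}$. Since $s_i = s_j = x$, both quantities equal $|\{k \in V(C) \mid s_k = x\}| - 1$, and thus $p_i(s) = p_j(s)$.

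I do not anticipate a serious obstacle here: the only subtle point is to make sure that $i$ and $j$ really lie in the same component of $G[V_x]$ (which uses $\{i,j\} \in E$ together with $i,j \in V_x$) and that the clique structure of that component is what converts ``neighbor in $G$ with color $x$'' into ``member of $V(C)$ with color $x$.'' Once uniformity is in hand, Theorem~\ref{thm:consistent} closes the argument, and the complete-graph case is immediate.
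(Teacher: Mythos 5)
Your proposal is correct and matches the paper's intended argument: the corollary is stated immediately after Theorem~\ref{thm:consistent} precisely because a color complete graph makes the game uniform, via exactly the computation you give (both endpoints of a unicolored edge lie in the same complete component $C$ of $G[V_x]$, so both payoffs equal $|\{k \in V(C) \mid s_k = x\}| - 1$; compare also the computation in Lemma~\ref{lem:congestion_iso}). No gaps.
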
 

The existence of strong equilibria for color complete graphs also follows from a result by Rozenfeld and Tennenholtz \cite{RT06} and the following lemma.

\begin{lemma} 
\label{lem:congestion_iso}
Coordination games on color complete graphs are a special case of monotone increasing congestion games in which all strategies are singletons.
\end{lemma}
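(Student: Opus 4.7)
The plan is to exhibit an explicit isomorphism from a coordination game on a color complete graph to a singleton congestion game with monotone increasing payoff functions. The key observation is that, when $G$ is color complete, any two players who can both choose color $x$ and who lie in the same connected component of $G[V_x]$ are necessarily adjacent, so the payoff of a player choosing $x$ depends only on the number of players in her component of $G[V_x]$ who also choose $x$.

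First, I would introduce the resources. For each color $x \in M$ and each connected component $C$ of $G[V_x]$, introduce a resource $r_{x,C}$. The set of players is $N = V$, and for each player $i$ the strategy set consists of the singleton strategies $\{r_{x,C}\}$ such that $x \in A_i$ and $i \in V(C)$; by definition of $V_x$ and color completeness, there is exactly one such component $C$ per color $x \in A_i$, so this gives a bijection between $A_i$ and the strategy set of $i$ in the constructed congestion game. The payoff associated with a resource $r$ used by $n$ players is defined to be $n - 1$, which is monotone increasing in $n$.

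Next, I would verify the isomorphism. Given a joint strategy $s$ in the coordination game, map it to the joint strategy $\sigma$ in the congestion game where $\sigma_i = \{r_{s_i, C_i}\}$ and $C_i$ is the unique component of $G[V_{s_i}]$ containing $i$. Let $n_r(\sigma)$ denote the load of resource $r$ under $\sigma$. Then the set of players picking resource $r_{x,C}$ is exactly the set of players in $V(C)$ who choose color $x$ under $s$. Since $C$ is a complete subgraph, any two such players are neighbors in $G$, and since distinct components of $G[V_x]$ contain no edge between them, all neighbors of $i$ choosing color $s_i$ lie in $C_i$. Hence
\[
p_i(s) = |\{j \in N_i \mid s_j = s_i\}| = n_{r_{s_i,C_i}}(\sigma) - 1,
\]
which matches the payoff of player $i$ in the congestion game. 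This correspondence is clearly bijective on joint strategies.

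I do not anticipate a serious obstacle; the only subtle point is to verify that, thanks to color completeness, the ``all neighbors choosing $s_i$ lie in the same component $C_i$'' step is valid, and that the induced map on joint strategies preserves payoffs exactly. Once this is observed, the resulting game is by construction a singleton congestion game, and the payoff function $n \mapsto n-1$ is monotone increasing, completing the proof. \qed
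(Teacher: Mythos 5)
Your proposal is correct and follows essentially the same route as the paper: the paper's opening ``without loss of generality $G[V_x]$ is connected (otherwise rename colors per component)'' is exactly your resource set $r_{x,C}$ indexed by color--component pairs, and both arguments then use color completeness to equate $p_i(s)$ with (load of the chosen resource) $-\ 1$ under the monotone payoff $k \mapsto k-1$. No gaps.
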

\begin{proof}
We can assume without loss of generality that for each color $x$, $G[V_x]$ is connected (otherwise, we replace $x$ with a new respective color for each component of $G[V_x]$). Then we can identify $x$ with a singleton resource, along with the payoff function $v_x: \mathbb N \to \mathbb R$ such that $v_x(k) = k-1$. Now, if a player $i$ chooses $s_i = x$ then
\[
p_i(s) = |\{j \in N_i \mid s_j = x\}| = |\{j \in V \mid s_j = x\}| - 1= v_x(|\{j \in V \mid s_j = x\}|),
\]
so the payoff in the coordination game coincides with the payoff in the associated congestion game. \qed
\end{proof}

Rozenfeld and Tennenholtz \cite{RT06} show that monotone increasing congestion games in which all strategies are singletons admit strong equilibria. Note, however, that our result above is stronger because we show that these games have the c-FIP.

\subsection{Non-existence of $3$-equilibria and existence thresholds}
\label{sec:non-existence-3eq}


We next prove that $3$-equilibria do not exist in general. Recall that $2$-equilibria always exist by Corollary~\ref{cor:2-eq}. 

\begin{figure}[t]
\centering
\begin{tikzpicture}[line join=bevel,z=-5.5,scale=3]
\coordinate (A1) at (0,0,-1); 
\coordinate (A2) at (-1,0,0);
\coordinate (A3) at (0,0,1);
\coordinate (A4) at (1,0,0);
\coordinate (B1) at (0,1,0);
\coordinate (C1) at (0,-1,0);
\coordinate (D1) at (-1.35,0,0);
\coordinate (D2) at (1.35,0,0);
\coordinate (D3) at (0,0,-2);
\coordinate (D4) at (0,0,2);

\draw [very thick,fill opacity=0.5,fill=green] (A2) -- (A1) -- (C1) -- cycle;
\draw [very thick,fill opacity=0.5,fill=yellow] (A2) -- (A3) -- (B1) -- cycle;
\draw [very thick,fill opacity=0.5,fill=red] (A3) -- (C1) -- (A4) -- cycle;
\draw [very thick,fill opacity=0.5,fill=blue] (A4) -- (B1) -- (A1) -- cycle;
\node at (A1) [circle,scale=0.7,fill=black] {};
\node at (A2) [circle,scale=0.7,fill=black] {};
\node at (A3) [circle,scale=0.7,fill=black] {};
\node at (A4) [circle,scale=0.7,fill=black] {};
\node at (B1) [circle,scale=0.7,fill=black] {};
\node at (C1) [circle,scale=0.7,fill=black] {};
\node at (D1) [circle,scale=0.7,fill=black] {};
\node at (D2) [circle,scale=0.7,fill=black] {};
\node at (D3) [circle,scale=0.7,fill=black] {};
\node at (D4) [circle,scale=0.7,fill=black] {};
\draw [very thick] (D1) -- (A2);
\draw [very thick] (D2) -- (A4);
\draw [very thick] (D3) -- (A1);
\draw [very thick] (D4) -- (A3);

\node [above left] at (A1) {\scriptsize $\mathbf{6},\{3,4\}$};
\node [below right] at (A4) {\scriptsize $\mathbf{5},\{2,3\}$};
\node [above right] at (A3) {\scriptsize $\mathbf{4},\{1,2\}$};
\node [below left] at (A2) {\scriptsize $\mathbf{3},\{1,4\}$};
\node [above] at (B1) {\scriptsize $\mathbf{1},\{1,3\}$};
\node [below] at (C1) {\scriptsize $\mathbf{2},\{2,4\}$};
\node [left] at (D1) {\scriptsize $\mathbf{7},\{1\}$};
\node [right] at (D2) {\scriptsize $\mathbf{9},\{3\}$};
\node [above] at (D3) {\scriptsize $\mathbf{10},\{4\}$};
\node [below] at (D4) {\scriptsize $\mathbf{8},\{2\}$};

\end{tikzpicture}
\caption{Three-dimensional illustration of the coordination game used to show that $3$-equilibria do not exist (Theorem~\ref{thm:no3eq}). The colored facets indicate the triangles that can be unicolored. The identity of the players is displayed in boldface. The strategy sets of the players are stated between curly braces.}
\label{fig:octahedron}
\end{figure}
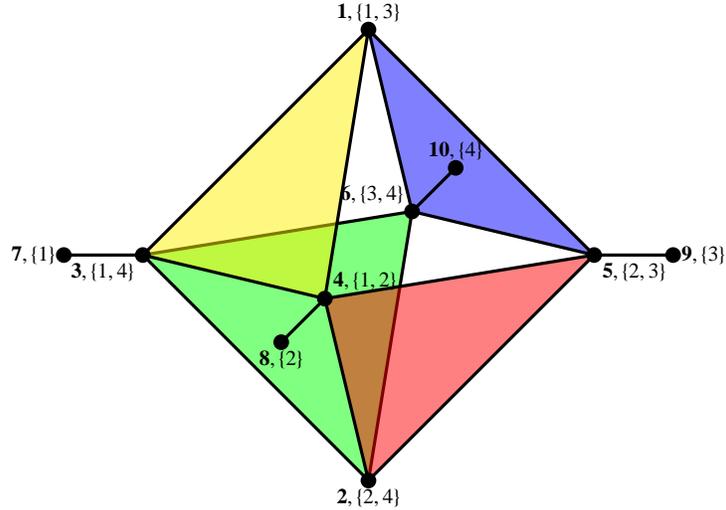

\begin{theorem}\label{thm:no3eq}
There exists a coordination game that does not have a $3$-equilibrium.
\end{theorem}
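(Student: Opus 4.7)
The plan is to argue by contradiction. Assume that the game depicted in Figure~\ref{fig:octahedron} admits a $3$-equilibrium $s$. Let $F_1 = \{1,3,4\}$, $F_4 = \{2,4,5\}$, $F_2 = \{1,5,6\}$, $F_3 = \{2,3,6\}$ denote the four ``colored'' facets of the octahedron, i.e., the only triangles whose three players share a single common color (respectively $1,2,3,4$); the other four triangles have empty intersection of strategy sets and cannot be unicolored. The game has an order-four rotational symmetry $\rho$ (cyclically permuting the equatorial players $3 \mapsto 4 \mapsto 5 \mapsto 6$, swapping the two polar players, and shifting colors $1 \mapsto 2 \mapsto 3 \mapsto 4 \mapsto 1$) that maps $F_1 \to F_4 \to F_2 \to F_3 \to F_1$ and preserves $k$-equilibria. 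A direct check shows that any two distinct colored facets impose conflicting colors on a shared player, so at most one of them can be unicolored in $s$.

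This splits the analysis into two cases. In Case A, no colored facet is unicolored; I would focus on $F_1$ and let $K \subseteq \{1,3,4\}$ be the subset of players whose strategy in $s$ differs from $1$. The deviation in which $K$ switches to color $1$ is a coalition of size at most $3$ after which each member of $K$ has payoff at least $2$. A short subcase analysis on $|K| \in \{1,2,3\}$ shows that this deviation is strictly profitable unless either one of the other colored facets $F_2, F_3, F_4$ becomes unicolored in $s$ (contradicting Case A), or some equatorial player strictly improves by switching alone (contradicting the Nash property guaranteed by Corollary~\ref{cor:2-eq}). In Case B, applying $\rho$ we may assume $F_1$ is the unique unicolored facet, so $s_1 = s_3 = s_4 = 1$ while $(s_2, s_5, s_6) \in \{2,4\} \times \{2,3\} \times \{3,4\}$, leaving eight configurations to inspect. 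In seven of them a single equatorial player strictly gains by switching toward its pendant-aligned color, again contradicting the Nash property. The one remaining configuration is $s = (1,4,1,1,3,4)$ (listing $s_1,\dots,s_6$), where all single- and two-player deviations are blocked, but the $3$-coalition $F_4 = \{2,4,5\}$ jointly switching to color $2$ strictly improves the payoff triple $(p_2, p_4, p_5)$ from $(1,2,1)$ to $(2,3,2)$, contradicting the $3$-equilibrium property.

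The main obstacle is this tight configuration in Case B, which shows that the octahedron construction genuinely requires a $3$-coalition deviation to be destabilised: every smaller coalition is blocked and only the ``opposite'' facet $F_4$ provides a profitable joint move. Its verification depends crucially on the asymmetric pendant structure, which grants each equatorial player payoff $1$ from its pendant in exactly one of its two colors; without the pendants the analogous configuration would be a $3$-equilibrium. The remaining bookkeeping in both cases is elementary once the rotational symmetry and the ``at most one unicolored facet'' observation have reduced the number of essentially distinct cases.
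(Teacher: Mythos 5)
Your overall strategy coincides with the paper's: exploit the symmetry $\rho$, note that at most one of the four unicolorable facets can be unicolored, and split on whether one is. Your identification of the tight profile $s=(1,4,1,1,3,4)$ and the destabilising coalition $\{2,4,5\}\to 2$ (payoffs $(1,2,1)\to(2,3,2)$) is correct and is exactly the critical case in the paper. However, your Case B contains a concrete false step. You claim that in seven of the eight configurations $(s_2,s_5,s_6)$ a \emph{single} player can profitably deviate, leaving only $(4,3,4)$ for a $3$-coalition. This fails for $(s_2,s_5,s_6)=(2,3,3)$, i.e.\ $s=(1,2,1,1,3,3)$: the payoffs are $p_1=2$, $p_2=0$, $p_3=3$, $p_4=2$, $p_5=2$, $p_6=1$, and one checks that every unilateral switch yields at most the current payoff (player $1$ to $3$ gives $2$, player $4$ to $2$ gives $2$, player $6$ to $4$ gives $1$, etc.), so this profile is a Nash equilibrium. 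It is eliminated only by a coalitional move, e.g.\ $\{2,6\}$ jointly switching to color $4$, which raises their payoffs from $(0,1)$ to $(1,2)$ — precisely the deviation the paper supplies in its corresponding subcase. As written, your argument leaves this configuration unhandled, so the case analysis is incomplete (though easily repaired).

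Two smaller inaccuracies in the same spirit: the escape clause ``a single \emph{equatorial} player strictly gains by switching toward its pendant-aligned color'' does not describe what actually happens in several configurations — e.g.\ in $(s_2,s_5,s_6)=(2,3,4)$ the only unilateral improver is the polar player $2$ (switching to $4$), and in the Case A profile $s=(1,2,1,2,3,3)$ the only unilateral improver is the polar player $1$ (switching to $3$). So the let-out you invoke in Case A (``some equatorial player strictly improves by switching alone'') is not always available in the stated form, even though \emph{some} profitable deviation of size at most $2$ always exists. Since Case A is only sketched, you should either carry out the subcase analysis explicitly (as the paper does, reducing by symmetry to $s_1=1$ and chasing forced colors), or at least weaken the claim to ``some coalition of size at most $2$ profitably deviates'' and verify it.
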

\begin{proof}
We define a coordination game $\mathcal{G}(G, A)$ as indicated in Figure~\ref{fig:octahedron}: 
There are $n = 10$ players and $4$ colors. The strategy sets are as follows: $A_1 = \{1,3\}$, $A_2 = \{2,4\}$, $A_3 = \{1,4\}$, $A_4 = \{1,2\}$, $A_5 = \{2,3\}$, $A_6 = \{3,4\}$, $A_7 = \{1\}$, $A_8 = \{2\}$, $A_9 = \{3\}$, $A_{10} = \{4\}$. There are $16$ edges, defined as follows: Players $1$ and $2$ are both connected to players $3$,$4$,$5$, and $6$, accounting for $8$ of the edges. There is aditionally a cycle $(3,4,5,6,3)$, accounting for four more edges. Lastly, players $7,8,9$, and $10$ all have a single edge attached to them and are connected to players $3,4,5$, and $6$, respectively. 

As can be seen from Figure~\ref{fig:octahedron}, the graph on which the game is played is essentially the skeleton of an octahedron: $12$ of the edges and $6$ of the nodes of the graph belong to this skeleton, and the four remaining edges are connected to four remaining nodes that are \emph{dummy} players (i.e., they have only one strategy that they can play).

Observe that there are eight triangles in the graph, which correspond to the eight facets of the octahedron. The strategy sets are defined such that only four out of the eight triangles of the octahedron can be unicolored. Also, this game is constructed such that if one triangle is unicolored, then the other three triangles are necessarily not unicolored.

We prove the theorem by showing that for every strategy profile of $\mathcal{G}$, there exists a profitable deviation of a set of at most $3$ players.
To simplify the proof, we make use of the many symmetries in $\mathcal{G}$, which are apparent from Figure \ref{fig:octahedron}.
Let $s$ be an arbitrary strategy profile of $\mathcal{G}$. 
We distinguish two cases:

\begin{itemize}

\item If there is a triangle that is unicolored under $s$, we may assume without loss of generality that this triangle is the one corresponding to players $\{1,3,4\}$ (because of symmetry), i.e., $s_1 = s_3 = s_4 = 1$. Observe that $p_4(s) = 2$. We distinguish two cases:
\begin{itemize}
\item $p_5(s) = 2$. Then $s_5 = s_6 = 3$. If $s_2 = 4$ then player $6$ can deviate profitably to $4$. If $s_2 = 2$ then the coalition $\{2,6\}$ can deviate profitably to $4$. 
\item $p_5(s) \leq 1$. If $s_2 = 2$ and $s_5 = 2$, then player $4$ can deviate profitably to $2$. If $s_2=2$ and $s_5 = 3$ then the coalition $\{4,5\}$ can deviate profitably to $2$. If $s_2 = 4$, then $s_5 = 3$, and coalition $\{2,4,5\}$ can deviate profitably to $2$. 
\end{itemize}

\item If there is no triangle that is unicolored under $s$, we distinguish again two cases. By symmetry we may assume that $s_1 = 1$. 
\begin{itemize}
\item $s_3 = 4$. Then $p_3(s) \leq 1$, so player $3$ can profitably deviate by changing his color to $1$.
\item $s_3 = 1$. Then $s_4 = 2$. If $p_4(s) = 1$ then player $4$ can profitably deviate by changing his color to $1$. Otherwise, $p_4(s) = 2$ and either $(s_2,s_5) = (2,3)$ or $(s_2,s_5) = (4,2)$. 
\begin{itemize}
\item If $(s_2,s_5) = (2,3)$, then if also $p_5(s) = 2$ it holds that $s_6 = 3$ and therefore player $1$ can profitably deviate to $3$. If $p_5(s) = 1$, then player $5$ can profitably deviate to $2$.
\item If $(s_2,s_5) = (4,2)$, then $p_2(s) \leq 1$, so player $2$ can profitably deviate to $2$.
\end{itemize}
\end{itemize} 
\end{itemize}

Note that each profitable deviation given above consists of at most three players. This concludes the proof.
\qed
\end{proof}

The coordination game given in Figure~\ref{fig:octahedron} is an example of a game that does not have a $3$-equilibrium but admits a $2$-equilibrium. 
We define the \emph{transition value} of a coordination game as the value of $k$ for which a $k$-equilibrium exists but a $(k+1)$-equilibrium does not. Clearly, the instance in Figure~\ref{fig:octahedron} has an transition value of $k = 2$. An interesting question is whether one can identify instances of coordination games with a non-trivial transition value $k \ge 3$. 

We next show that the coordination game given in Figure~\ref{fig:non-existence} is an instance with transition value $k = 4$.

\begin{theorem}\label{thm:non-existence}
There is a coordination game that has a transition value of $4$. 
\end{theorem}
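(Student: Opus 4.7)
The plan is to verify that the coordination game depicted in Figure~\ref{fig:non-existence} has transition value exactly 4. From Example~\ref{exa:1} we already know that the joint strategy $s^*$ consisting of the underlined colors is a Nash equilibrium and that the coalition $\{1,4,5,6,7\}$ of size 5 can profitably deviate from $s^*$ by all switching to color $c$. Hence the transition value is at most 4. To pin it down, I would need to establish two claims: (i) $s^*$ is a 4-equilibrium, so a 4-equilibrium exists; and (ii) no joint strategy is a 5-equilibrium.

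For (i), I would carry out a case analysis over coalitions $K \subseteq N$ with $|K| \le 4$ and check that no profitable deviation from $s^*$ exists. Exploiting the automorphisms of the graph — in particular the pairwise interchangeability of $\{2,3\}$, $\{4,5\}$, and $\{6,7\}$, together with the top-bottom involution $1 \leftrightarrow 8,\ 2 \leftrightarrow 6,\ 3 \leftrightarrow 7$ combined with the color swap $b \leftrightarrow c$ — reduces the number of orbits substantially. I would organize the case analysis by whether $K$ contains both of $\{4,5\}$ or neither: in the latter case the payoffs of $4$ and $5$ in $s^*$ (each equal to $3$) make it hard for their common neighbors to improve simultaneously, while in the former case the coalition has only two ``free'' slots for the remaining players and the deviation can be ruled out by examining the six available choices for those slots.

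Part (ii) is the main obstacle. My plan is to argue by contradiction: suppose $s$ is a 5-equilibrium (hence in particular a Nash equilibrium). First I would derive structural constraints on any Nash profile. The key one is that $s_4 = s_5$: the nodes 4 and 5 are adjacent and share six common neighbors, so if their colors differed one of them would strictly profit by copying the other. By the $b \leftrightarrow c$ symmetry, we may assume $s_4 = s_5 = b$. Since $s_2,s_3 \in \{a,b\}$ and $s_8 \in \{a,b\}$, player 1 has no matching neighbor for color $c$, which further forces $s_1 = a$ (or equivalent degenerate Nash profiles where several neighbors also take a forced color); analogous constraints apply to $8$ and to the ``corner'' players $2,3,6,7$. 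This leaves only a small list of candidate Nash profiles up to the graph symmetries, and for each I would exhibit an explicit profitable coalition of size at most 5, analogous to the deviation $\{1,4,5,6,7\} \mapsto c$ used against $s^*$ in Example~\ref{exa:1}.

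The main difficulty will be the enumeration in part (ii): one must make sure no Nash equilibrium is missed after symmetry reduction, since every such profile is a potential candidate for a 5-equilibrium. Once the short list of Nash equilibria is in hand, verifying that each admits a size-$\le 5$ profitable deviation is a routine payoff computation on the fixed graph.
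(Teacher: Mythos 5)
Your proposal is correct and follows essentially the same route as the paper: the same game from Figure~\ref{fig:non-existence}, the same witness $4$-equilibrium, the forcing of $s_4=s_5$ via their shared neighborhood, the $b \leftrightarrow c$ symmetry to normalize to $s_4=s_5=b$, and explicit coalitional deviations of size at most $5$ (culminating in $\{1,4,5,6,7\}\mapsto c$) to eliminate every remaining Nash candidate. One small caution: the early sentence ``hence the transition value is at most $4$'' does not yet follow from the single size-$5$ deviation against $s^*$ --- it only follows once your part (ii) rules out \emph{all} $5$-equilibria, which your plan correctly goes on to do; the paper's own proof differs only in being terser on part (i) (asserting the $4$-equilibrium property you propose to verify by case analysis) and in ordering the Nash deductions so as to force a unique candidate profile rather than a short list.
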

\begin{proof}
Consider the coordination game discussed in Example~\ref{exa:1} (see Figure~\ref{fig:non-existence}). We first argue that it does not admit a $5$-equilibrium. Assume for the sake of a contradiction that $s$ is a strong equilibrium of this game.

Consider players 4 and 5.  Let $i, j \in \{4,5\}$, $i \neq j$ be such that $p_i(s) \leq p_j(s)$.  Note that the neighbors of $i$ and $j$ (excluding $j$ and $i$, respectively) are the same. As a consequence, if $s_i \neq s_j$, then player $i$ can profitably deviate to player $j$'s color, i.e., $s'_i = s_j$. Thus players 4 and 5 have the same color in $s$, say $s_4 = s_5 = b$. (Because of the symmetry of the instance, the case $s_4 = s_5 = c$ follows analogously.)

Assume there exists a player $i \in \{2, 3\}$ with $s_i \neq b$. Then $i$ can profitably deviate by choosing $s'_i = b$. It follows that players 2 and 3 have color $s_2 = s_3 = b$.

Next, consider player 8 and suppose $s_8 = b$. Then his payoff is $p_8(s) = 2$. Further, the payoff of each of the players 1, 6 and 7 is 0 because all their neighbors have color $b$. But then the coalition $K = \{1, 6, 7, 8\}$ can profitably deviate by choosing color $a$. We conclude that $s_8 = a$.

As a consequence, for players 1, 6, and 7 we have $s_1 = s_6 = s_7 = a$ as otherwise any such player could profitably deviate by choosing $a$.

Thus, the only remaining possible configuration for $s$ is the one indicted in Figure~\ref{fig:non-existence} (by the underlined strategies). But this is not a strong equilibrium because the coalition $K = \{1, 4, 5, 6, 7\}$ can profitably deviate by choosing color $c$. This yields a contradiction and proves the non-existence of $5$-equilibria. 

On the other hand, it is easy to see that the strategy profile indicated in Figure~\ref{fig:non-existence} constitutes a $4$-equilibrium. This concludes the proof.
\qed
\end{proof}


In general, we leave open the question for which $k \geq 2$ there exist coordination games with transition value $k$.



\bigskip

The above example can be adapted to show that there are coordination games that do not have the c-FIP but are \emph{c-weakly acyclic}. Recall that a game $\mathcal{G}$ is \emph{c-weakly acyclic} if for every joint strategy there exists a finite c-improvement path that starts at it. Note that a c-weakly acyclic game admits a strong equilibrium.

\begin{corollary} 
There is a coordination game that does not have the c-FIP but is $c$-weakly acyclic. 
\end{corollary}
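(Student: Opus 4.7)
The plan is to adapt the eight-player coordination game of Figure~\ref{fig:non-existence} by adding a single new color $d$ to every player's strategy set. I will show that the resulting game (i) fails the c-FIP, (ii) admits a strong equilibrium, and (iii) is c-weakly acyclic, which together give the corollary.

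For (i), I would exhibit an explicit four-cycle of profitable coalitional deviations that uses only the original three colors $a,b,c$. Starting from the Nash equilibrium $s^0 = (a,b,b,b,b,a,a,a)$ of Example~\ref{exa:1}, the coalition $\{1,4,5,6,7\}$ can profitably switch to $c$ (this is precisely the step already used in the proof of Theorem~\ref{thm:non-existence}); in the resulting profile $\{1,2,3\}$ can profitably switch to $a$; then $\{2,3,4,5,8\}$ can profitably switch to $b$; and finally $\{6,7,8\}$ can profitably switch to $a$, restoring $s^0$. Each step is a routine payoff computation, and since only colors already present in every deviating player's original strategy set appear, the availability of $d$ does not affect profitability, so this four-cycle induces an infinite c-improvement path in the modified game.

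For (ii), the all-$d$ profile is a strong equilibrium: in it each player $i$ attains the maximum possible payoff $\dg(i)$, so no coalition can strictly improve. For (iii), given an arbitrary profile $s$, let $K = \{i \in N : s_i \neq d\}$. If $K = \emptyset$ there is nothing to prove; otherwise I claim the joint switch of all members of $K$ to $d$ is strictly profitable. After the deviation every node has color $d$, so each $i \in K$ obtains payoff $\dg(i)$, and strict improvement reduces to $i$ having some neighbour not currently sharing $s_i$. The main obstacle is precisely to rule out ``local saturation'': for every player $i$ and every non-$d$ color $c \in A_i$ one needs a neighbour $j \in N_i$ with $c \notin A_j$. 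A brief case analysis over Figure~\ref{fig:non-existence} settles this: every $a$-capable player is adjacent to $4$ or $5$, whose set $\{b,c\}$ excludes $a$; every $b$-capable player is adjacent to $1$, $6$, or $7$, whose sets exclude $b$; and every $c$-capable player is adjacent to $2$, $3$, or $8$, whose sets exclude $c$. The deviation of $K$ therefore reaches the strong equilibrium all-$d$ in a single step, so a finite c-improvement path starts at every joint strategy.
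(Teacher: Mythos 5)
Your proposal is correct and follows essentially the same route as the paper: add a common color $d$ to the game of Figure~\ref{fig:non-existence}, observe that the all-$d$ profile is a strong equilibrium reachable in one profitable coalitional step from any profile, and note that c-improvement paths confined to the original colors can cycle. The only difference is that you make explicit two points the paper leaves implicit -- the concrete four-step deviation cycle witnessing the failure of the c-FIP (the paper instead appeals to the non-existence of strong equilibria in the unmodified game) and the ``no local saturation'' check needed for the one-step convergence argument -- and both of your verifications are accurate.
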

\begin{proof}
Take the coordination game from Example \ref{exa:1} and modify it by adding to each color set a new, common color $d$. Then the joint strategy $s$ in which each player selects $d$ is a strong equilibrium. Moreover, for each player her payoff in $s$  is strictly higher than in any joint strategy in which she chooses another color. So $s$ can be reached from each joint strategy in just one profitable deviation, by a coalition of the players who all switch to $d$. 
On the other hand, the argument presented in Example \ref{exa:1} shows that this game does not have the c-FIP.  
\qed
\end{proof}

\section{Inefficiency of $k$-equilibria}
\label{sec:anarchy}

We first summarize some results concerning the strong price of stability of coordination games. 

\begin{theorem}
The strong price of stability is 1 in each of the following cases:
\begin{itemize}
\item $G$ is a pseudoforest;
\item $G$ is a color forest;
\item there are only two colors.
\end{itemize}
\end{theorem}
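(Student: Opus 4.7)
The plan is to invoke the c-FIP results already proved for each of the three cases and combine them with the observation that along any c-improvement path the social welfare is nondecreasing. Then, starting from a social optimum $s^*$, every maximal c-improvement path must terminate (by c-FIP) in a strong equilibrium $s^\dagger$ with $\SW(s^\dagger) \ge \SW(s^*)$, forcing equality by optimality of $s^*$, and hence giving $\SW(s^\dagger)/\SW(s^*) = 1$.

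So the core task is to verify, in each of the three cases, that every profitable joint deviation $s \betredge{K} s'$ satisfies $\SW(s') \ge \SW(s)$. First, for $G$ a color forest, this is immediate from Theorem~\ref{thm:color_forest}, which already gives the strict inequality $\Delta\SW > 0$. Second, for the case of two colors, the proof of the corresponding corollary in Section~\ref{sec:existence} exhibits $\SW$ itself as a generalized ordinal c-potential, so $\Delta\SW > 0$ again. Third, for $G$ a pseudoforest: any induced subgraph $G[K]$ is itself a pseudoforest, so each of its connected components contains at most one cycle. By partitioning $K$ according to connected components of $G[K]$ (exactly as in the proof of Theorem~\ref{thm:pseudoforest}), the deviation decomposes into a sequence of profitable deviations, each by a coalition inducing a graph with at most one cycle. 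By Corollary~\ref{cor:one-cycle}, each such step satisfies $\Delta\SW \ge 0$, and therefore $\SW(s') \ge \SW(s)$ overall.

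Having established monotonicity of $\SW$ along c-improvement paths in all three cases, the remainder is routine: fix a social optimum $s^*$ and a maximal c-improvement path starting at $s^*$; by the c-FIP (already proved in Theorems~\ref{thm:color_forest}, \ref{thm:pseudoforest}, and the two-color corollary), this path is finite. Its last element $s^\dagger$ admits no profitable coalitional deviation, hence is a strong equilibrium, and $\SW(s^\dagger) \ge \SW(s^*)$ by monotonicity. Since $s^*$ is optimal, $\SW(s^\dagger) = \SW(s^*)$, proving that some strong equilibrium attains the optimal social welfare, i.e., the strong price of stability is~$1$.

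The only subtle point — and hence the main thing to be careful about — is the pseudoforest case, since Corollary~\ref{cor:one-cycle} guarantees only weak monotonicity of $\SW$ (not strict, as in the other two cases) and only for coalitions whose induced subgraph has at most one cycle. The decomposition into connected components of $G[K]$ is what makes the corollary applicable, and it is crucial that induced subgraphs of a pseudoforest are pseudoforests; this closure under taking induced subgraphs is the combinatorial fact doing the real work.
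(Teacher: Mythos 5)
Your proof is correct and takes essentially the same approach as the paper: in all three cases it rests on the already-established generalized ordinal c-potential results (Theorem~\ref{thm:color_forest}, the two-color corollary, and Corollary~\ref{cor:one-cycle} with the component-wise decomposition). The only cosmetic difference is in the pseudoforest case, where the paper simply observes that a lexicographic maximizer of the potential $P$ from Theorem~\ref{thm:pseudoforest} is simultaneously a social optimum and a strong equilibrium, whereas you reach the same conclusion dynamically by following a finite c-improvement path from a social optimum along which $\SW$ cannot decrease.
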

\begin{proof}
If $G$ is a pseudoforest, a maximum of $P$ in the lexicographic ordering defined in the proof of Theorem~\ref{thm:pseudoforest} is a strong equilibrium and a social optimum. In the other two cases, the social welfare function $\SW$ is a generalized ordinal c-potential. So in both cases each social optimum is a strong equilibrium. 
\qed
\end{proof}

We next study the $k$-price of anarchy of our coordination games. It is easy to see that the price of anarchy is infinite. In fact, this holds independently of the graph structure, as the next theorem shows.

\begin{theorem} 
For every graph there exists strategy sets for the players such that the price of anarchy of the resulting coordination game is infinite.
\end{theorem}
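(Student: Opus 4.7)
\medskip

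\noindent\textbf{Proof plan.}
The idea is to construct, for an arbitrary graph $G=(V,E)$, a color assignment under which there exists a Nash equilibrium of social welfare $0$ while the social optimum has strictly positive welfare (or else the ratio is $\infty$ by the division-by-zero convention). The standard device is to give every player one ``common'' color together with a distinct ``private'' color.

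Concretely, first I would fix a set of colors $M = \{0\} \cup \{c_i : i \in V\}$, where $c_i \neq c_j$ whenever $i\neq j$ and $c_i \neq 0$. Define $A_i := \{0, c_i\}$ for every $i \in V$. Second, I would consider the joint strategy $s^{\text{opt}}$ in which $s_i^{\text{opt}} = 0$ for every $i$: every edge $\{i,j\}\in E$ is unicolored, so $\SW(s^{\text{opt}}) = 2|E|$. Third, I would consider the joint strategy $s^{\text{eq}}$ in which $s_i^{\text{eq}} = c_i$ for every $i$: since the private colors are pairwise distinct, no edge is unicolored and $p_i(s^{\text{eq}}) = 0$ for every $i$, giving $\SW(s^{\text{eq}}) = 0$.

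Next I would verify that $s^{\text{eq}}$ is a Nash equilibrium. The only alternative strategy for player $i$ is to switch to color $0$; under $s^{\text{eq}}$ no neighbor plays color $0$, so such a deviation yields payoff $0$, which does not strictly improve on the current payoff $0$. Hence $s^{\text{eq}}$ is a $1$-equilibrium.

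Finally, I would conclude: if $E \neq \emptyset$, then the price of anarchy is at least $\SW(s^{\text{opt}})/\SW(s^{\text{eq}}) = 2|E|/0 = \infty$ by the convention stated in Section~\ref{sec:prelim}; if $E = \emptyset$, then all joint strategies have social welfare $0$, so the price of anarchy is $0/0 = \infty$ by the same convention. There is no real obstacle: the argument is essentially a single-line observation once the private-color construction is written down. The only thing to be careful about is to handle (or dismiss via the convention) the degenerate case $E = \emptyset$, and to make explicit that the deviation check for $s^{\text{eq}}$ is immediate because no neighbor ever plays the common color~$0$.
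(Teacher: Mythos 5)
Your construction is exactly the paper's: a common color plus a distinct private color per node, with the all-private profile being a zero-welfare Nash equilibrium and the all-common profile achieving $2|E|$. The argument is correct and takes essentially the same approach as the paper (you are slightly more careful in spelling out the equilibrium check and the degenerate case $E=\emptyset$).
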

\begin{proof}
Let $G = (V, E)$ be an arbitrary graph. We assign to each node $i \in V$ a color set $A_i = \{x_i, c\}$, where $x_i$ is a private color, i.e., $x_i \neq x_j$ for every $j \neq i$, and $c$ is a common color. The joint strategy $s$ in which every player chooses her private color constitutes a Nash equilibrium with $\SW(s) = 0$. On the other hand, the joint strategy $s'$ in which every player chooses the common color $c$ is a social optimum with $\SW(s') = 2|E|$. 
\qed
\end{proof}

We now determine the $k$-price of anarchy and the strong price of anarchy. We define for every $j \in N$ and $K \subseteq N$ and joint strategy $s$,
\[
N_j^K(s) = \{\{i,j\} \in E \mid i \in K,\, s_i = s_j\}.
\]
Intuitively, $|N_j^K(s)|$ is the payoff $j$ derives from players in $K$ under $s$.

\begin{theorem} \label{thm:kpoa}
The $k$-price of anarchy of coordination games is between $2\frac{n-1}{k-1}-1$ and $2\frac{n-1}{k-1}$ for every $k \in \{2, \dots, n\}$. Furthermore, the strong price of anarchy is exactly $2$. 
\end{theorem}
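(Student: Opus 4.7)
I will prove the upper bound first, then address the lower bounds. For the upper bound, fix a $k$-equilibrium $s$ and a social optimum $s^*$. The plan is to first establish, for every coalition $K \su V$ with $|K| = k$, the per-coalition inequality $\SW_K(s) \geq |E^+_{s^*} \cap E[K]|$. Once this is available, summing over all such $K$ and double-counting gives the result: each player $j$ appears in exactly $\binom{n-1}{k-1}$ such subsets while each edge of $E^+_{s^*}$ appears in exactly $\binom{n-2}{k-2}$ of them, so the summed inequality reads $\binom{n-1}{k-1}\SW(s) \geq \binom{n-2}{k-2}|E^+_{s^*}| = \binom{n-2}{k-2}\SW(s^*)/2$, which rearranges to $\SW(s^*)/\SW(s) \leq 2(n-1)/(k-1)$. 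Setting $k = n$ yields the strong PoA upper bound of $2$.

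To prove the per-coalition inequality, I would iteratively enumerate $K = \{j_1, \ldots, j_k\}$ as follows. At step $t$, consider $K_t := K \setminus \{j_1, \ldots, j_{t-1}\}$ (of size at most $k$) and the joint strategy $(s^*_{K_t}, s_{-K_t})$. Since $s$ is a $k$-equilibrium, the associated (possibly partial) deviation from $s$ is not profitable, so there exists some $j_t \in K_t$ with $p_{j_t}(s) \geq p_{j_t}(s^*_{K_t}, s_{-K_t}) \geq |N_{j_t}^{K_t}(s^*)|$; the edge case where the partial deviation coincides with $s$ on all of $K_t$ is handled by picking $j_t \in K_t$ arbitrarily, since then the inequality is immediate from the definition of $N_j^K$. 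Rewriting $|N_{j_t}^{K_t}(s^*)| = |N_{j_t}^K(s^*)| - b_t$ with $b_t := |\{v \in N_{j_t} \cap \{j_1, \ldots, j_{t-1}\} : s^*_v = s^*_{j_t}\}|$ and summing over $t = 1, \ldots, k$, the two double-counting identities $\sum_t |N_{j_t}^K(s^*)| = 2|E^+_{s^*} \cap E[K]|$ (each such edge counted at both endpoints) and $\sum_t b_t = |E^+_{s^*} \cap E[K]|$ (each such edge counted exactly once, at its later endpoint in the enumeration) combine to give $\SW_K(s) \geq 2|E^+_{s^*} \cap E[K]| - |E^+_{s^*} \cap E[K]| = |E^+_{s^*} \cap E[K]|$.

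For the strong PoA lower bound of $2$, I would use a clique-with-pendants construction: a clique on $m$ vertices where each clique node $i$ has strategy set $\{x, c_i\}$ (a common $x$ and a private $c_i$), with $m-1$ pendants of strategy set $\{c_i\}$ attached to each clique node $i$; total $n = m^2$ players. The profile $s$ in which every clique node plays $x$ has $\SW(s) = m(m-1)$; it is a strong equilibrium because any deviating coalition of clique nodes (each forced to switch $x \to c_i$) keeps each deviator at payoff $m-1$, now earned from its $m-1$ pendants instead of its $m-1$ clique neighbors, so no strict improvement is possible. The profile $s^*$ in which clique node $i$ plays $c_i$ has $\SW(s^*) = 2m(m-1)$, giving ratio exactly $2$. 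The general lower bound $2(n-1)/(k-1)-1$ for $k < n$ would follow from a parametric variant of this construction, where the number of pendants per clique node is chosen so that the ``all-common-color'' profile is a $k$-equilibrium (rather than a strong equilibrium), while the $\SW(s^*)/\SW(s)$ ratio is pushed toward $2(n-1)/(k-1)$.

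The main obstacle is the double-counting identity $\sum_t b_t = |E^+_{s^*} \cap E[K]|$: without this refinement, the naive bound obtained from the iterative argument would be a factor of $2$ loose, giving only $4(n-1)/(k-1)$ instead of the sharp $2(n-1)/(k-1)$. Verifying the edge case in the iterative step, and matching the $-1$ slack in the general lower-bound construction, is routine but requires some care.
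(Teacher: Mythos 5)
Your upper-bound argument is correct and is essentially the paper's own proof in a slightly cleaner packaging: the paper also peels off one player at a time using the $k$-equilibrium property to get $|N_{v_i}^{\{v_1,\dots,v_i\}}(\sigma)| \leq p_{v_i}(s)$, and its intermediate inequality $\SW_K(\sigma) \leq 2\SW_K(s) + |E^+_\sigma \cap \delta(K)|$ is algebraically equivalent to your per-coalition bound $\SW_K(s) \geq |E^+_{s^*} \cap E[K]|$; your version merely avoids having to count, for each edge, the number of coalitions $K$ with that edge in $\delta(K)$. The double-counting coefficients $\binom{n-1}{k-1}$ and $\binom{n-2}{k-2}$ check out and give $2\frac{n-1}{k-1}$ as claimed. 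Your strong-PoA lower bound (clique with pendants) is also correct, though different from the paper's example (a $4$-cycle with color sets $\{a,b\},\{a,b\},\{a,b\},\{a\}$): in your instance no deviating clique node can strictly improve since it trades $m-1$ clique neighbors for $m-1$ pendants, so the all-$x$ profile is a strong equilibrium of welfare $m(m-1)$ against an optimum of $2m(m-1)$.

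The genuine gap is the lower bound of $2\frac{n-1}{k-1}-1$ for general $k<n$. A ``parametric variant'' of the clique-with-pendants construction cannot deliver it: if each of the $l$ clique nodes has $p$ private pendants, then a deviating clique node always gains exactly $p$ and loses exactly $l-1$, \emph{independently of who else deviates}, because the private colors $c_i$ are pairwise distinct. Hence the all-$x$ profile is either a strong equilibrium (if $p \leq l-1$) or not even a Nash equilibrium (if $p \geq l$); there is no intermediate regime in which it is a $k$-equilibrium but not a $(k+1)$-equilibrium, so tuning the number of pendants cannot produce a $k$-dependent bound. What is needed is a construction in which the deviators' gains come from \emph{each other}, so that the gain of each deviator is capped by the coalition size. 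The paper's Example~\ref{ex:lb1} does exactly this: a set $V_1$ of $k$ nodes with color sets $\{a,c\}$ joined to each other and to $n-k$ nodes $V_2$ with color sets $\{b,c\}$; in the profile where $V_1$ plays $a$ and $V_2$ plays $b$, any profitable coalition must switch to the shared color $c$, and a deviator $v \in V_1$ then earns only $|N_v \cap K| \leq k = p_v(s)$, so the profile is a $k$-equilibrium of welfare $k(k-1)$ against an optimum (all play $c$) of welfare $k(n-1)+(n-k)k$, giving exactly $2\frac{n-1}{k-1}-1$. Without some such coordination-among-deviators mechanism, your lower-bound sketch does not go through.
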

\begin{proof}
We first prove the upper bound. 
By the definition of the payoff function for all joint strategies $s$
and $\sigma$, we have
$
|N_j^K(\sigma)| \leq p_j(\sigma_K, s_{-K})
$.

Suppose that the considered game has a $k$-equilibrium, say $s$,
and let $\sigma$ be a social optimum.  By the definition of a
$k$-equilibrium, for all coalitions $K$ of size at most $k$ there
exists some $j \in K$ such that $p_j(\sigma_{K}, s_{-K}) \leq p_j(s)$ and
hence by the above $|N_j^K(\sigma)| \leq p_j(s).$

Fix a coalition $K = \{v_1, \ldots, v_k\}$ of size $k$. We know that there is some $j \in K$ such that $|N_j^{K}(\sigma)| \leq p_j(s)$.
Rename the nodes so that $j = v_k$. Further, there is a node $j$ such
that 
$$
|N_j^{\{v_1, \ldots, v_{k-1}\}}(\sigma)| \leq p_j(s).
$$
Again we
rename the nodes so that $j = v_{k-1}$. Continuing this way
we obtain that for all $i \in \{1, \ldots, k\}$ it holds that $|N_{v_i}^{\{v_1, \ldots, v_i\}}(\sigma)| \leq p_{v_i}(s).$ Hence 
\begin{align*}
  p_{v_i}(\sigma) &= |N_{v_i}^{\{v_1, \ldots, v_i\}}(\sigma)| + |N_{v_i}^{V \setminus \{v_1, \ldots, v_i\}}(\sigma)|
  \leq p_{v_i}(s) + |N_{v_i}^{V \setminus \{v_1, \ldots, v_i\}}(\sigma)|\\
  & = p_{v_i}(s) + |N_{v_i}^{K \setminus \{v_1, \ldots, v_i\}}(\sigma)|
  + |N_{v_i}^{V \setminus K}(\sigma)|.
\end{align*}
Summing over all players in $K$ we obtain
\begin{align} \label{eqn:sw}
\SW_K(\sigma)  \leq \SW_K(s) + \sum_{i = 1}^k \big( |N_{v_i}^{K \setminus \{v_1, \ldots, v_i\}}(\sigma)| + |N_{v_i}^{V \setminus K}(\sigma)|\big).
\end{align}
But
\begin{align*} 
\sum_{i = 1}^k |N_{v_i}^{K \setminus \{v_1, \ldots, v_i\}}(\sigma)|
&= \sum_{i = 1}^k |\{j > i:\, \{v_i, v_j\} \in E^+_{\sigma}\}| = |E^+_\sigma \cap E[K]|
\end{align*}
and
$
\sum_{i = 1}^k |N_{v_i}^{V \setminus K}(\sigma)| = |E^+_\sigma \cap \delta(K)|.
$
Hence rewriting \eqref{eqn:sw} yields
\[
\SW_K(\sigma) \leq \SW_K(s) + |E_\sigma^+ \cap E[K]| + |E_\sigma^+ \cap \delta(K)|.
\]
It also holds that $\SW_K(\sigma) = 2|E^+_\sigma \cap E[K]| + |E^+_\sigma \cap \delta(K)|$. So we get
\[
\SW_K(\sigma) \leq \SW_K(s) + \frac{1}{2} \SW_K(\sigma) + \frac{1}{2} |E^+_\sigma \cap \delta(K)|,
\]
which implies that 
\begin{align} \label{eqn:sw2}
\SW_K (\sigma) \leq 2 \SW_K(s) + |E^+_\sigma \cap \delta(K)|.
\end{align}

Now we sum over all coalitions $K$ of size $k$. Each player $i$
appears in $n-1 \choose k-1$ of such sets because it is possible to
choose $k-1$ out of $n-1$ remaining players to form a set $K$ of
size $k$ that contains $i$. Hence,
\begin{align*}
\sum_{K: |K| = k} \SW_K(\sigma) &
= \sum_{i = 1}^{n} \sum_{K: \, K \ni i} p_i(\sigma) 
= \sum_{i = 1}^{n} {n-1 \choose k-1} p_i(\sigma) = {n-1 \choose k-1} \SW(\sigma).
\end{align*}
We obtain an analogous expression for the joint strategy $s$.

Furthermore, for each edge $e = \{u,v\} \in E^+_\sigma$, we can choose
$2 {n-2 \choose k-1}$ sets $K$ of size $k$ such that $e \in
\delta(K)$. Indeed, assuming that $u \in K$ and $v \notin K$, we can
choose $k-1$ out of $n-2$ remaining players to complete $K$ and hence
there exist ${n-2 \choose k-1}$ of those sets. Reversing the roles of
$u$ and $v$ and summing up yields $2 {n-2 \choose k-1}$.  Hence
\[
\sum_{K: |K| = k} |E^+_\sigma \cap \delta(K)| = 2 {n-2 \choose k-1} |E^+_\sigma| = {n-2 \choose k-1} \SW(\sigma).
\]
By summing over all coalitions $K$ of size $k$, equation \eqref{eqn:sw2} yields
\[
{n-1 \choose k-1} \SW(\sigma) \leq 2 {n-1 \choose k-1} \SW(s) + {n-2 \choose k-1} \SW(\sigma).
\]
It follows that the $k$-price of anarchy is at most 
$$
\frac{2{n-1 \choose k-1}}{{n-1 \choose k-1} - {n-2 \choose k-1}} = 2 \frac{n-1}{k-1}.
$$
This concludes the proof of the upper bound. 

The claimed lower bounds follow from Examples~\ref{ex:lb1} and \ref{ex:lb2} given below.
\qed
\end{proof}

The following example establishes a lower bound on the $k$-price of anarchy. 

\begin{example}\label{ex:lb1}
Fix $n$ and $k \in \{2, \dots, n\}$. Let $V(G)$ consist of two sets $V_1$ and $V_2$ of size $k$ and $n-k$, respectively, and define
\[
E[G] = \{\{u,v\} \mid u \in V_1, v \in V_1 \cup V_2\}.
\]
Fix three colors $a,b$ and $c$. For $v \in V_1$, let $A(v) = \{a,c\}$. For $v \in V_2$, let $A(v) = \{b,c\}$. Then the color assignment $\sigma$ in which each player chooses the common color $c$ is a social optimum. The social welfare is
\[
\SW(\sigma) = \SW_{V_1}(\sigma) + \SW_{V_2}(\sigma) = k(n-1) + (n-k)k.
\]

Next we show that the color assignment $s$ in which every node in $V_1$ chooses $a$ and every node in $V_2$ chooses $b$ is a $k$-equilibrium. Assume that there is a profitable deviation $s \betredge{K} s'$ such that $|K| \leq k$. Then all nodes in $K$ switch to $c$ and also all nodes that choose $c$ in $s'$ are in $K$. Hence for all $v \in K$, $p_v(s') = |N_v \cap K|.$ So there is a node $v \in V_1 \cap K$ because otherwise the payoff of all nodes in $K$ would remain 0. But then $p_v(s') = |N_v \cap K| \leq k = p_v(s)$, which yields a contradiction. 

Note that $\SW(s) = k(k-1).$
It follows that the $k$-price of anarchy is at least
\[
\frac{\SW(\sigma)}{\SW(s)} = \frac{k(n-1) + (n-k) k}{k(k-1)} = \frac{2 (n-1) - (k-1)}{k-1} = 2\frac{n-1}{k-1}-1.
\]
\qed
\end{example}

The following example shows that the upper bound of $2$ on the strong price of anarchy ($k = n$) of Theorem~\ref{thm:kpoa} is tight. 

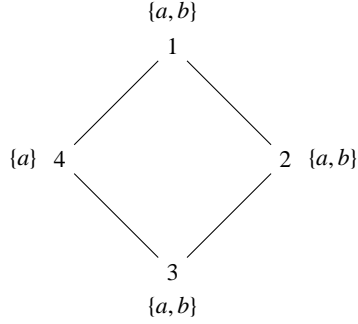
\begin{figure}[t]
\begin{center}
\begin{tikzpicture}[scale=1.5]
    \node [label={\lf $\{ a, b\}$}] (1) at (0,1) {$1$};    
    \node [label=right:{\lf $\{ a, b\}$}] (2) at (1,0) {$2$};    
    \node [label=below:{\lf $\{ a, b\}$}] (3) at (0,-1) {$3$};    
    \node [label=left:{\lf $\{ a \}$}] (4) at (-1,0) {$4$};       

       \draw (1) -- (2);       
       \draw (2) -- (3);              
       \draw (3) -- (4);              
       \draw (1) -- (4);       
\end{tikzpicture}
\end{center}
\caption{A coordination game showing that the strong price of anarchy is at least 2.}
\label{fig:graph2}
\end{figure}

\begin{example}\label{ex:lb2}
Consider the graph and the color assignment depicted in Figure~\ref{fig:graph2}.
Here $(a,a,a,a)$ is a social optimum with the social welfare 8, while
$(b,b,b,a)$ is a strong equilibrium with the lowest social welfare, 4. So the strong price of anarchy is $2$ in this example of $4$ players. By duplicating the graph $l$ times, we can draw the same conclusion for the case of $4l$ players. 
\qed
\end{example}


\section{Complexity}
\label{sec:computation} 

In this section we study complexity issues concerning $k$-equilibria.

\subsection{Verification}

First, we show that in general it is hard to decide whether a given joint strategy is a $k$-equilibrium. 

Let $k$-\textsc{Equilibrium} denote the problem to decide, given a coordination game with a joint strategy $s$ and $k \in \{1, \dots, n\}$, whether $s$ is a $k$-equilibrium. 

\begin{theorem}
$k$-\textsc{Equilibrium} is co-NP-complete.
\end{theorem}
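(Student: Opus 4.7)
The plan is to establish co-NP-completeness in the two standard steps, by first exhibiting a polynomial-time verifier for the complementary problem and then reducing \textsc{Clique} to that complement.

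\emph{Membership in co-NP.} I would verify that the complementary question -- ``does some coalition $K$ with $|K|\le k$ admit a profitable deviation from $s$?'' -- lies in NP. A natural certificate is a pair $(K, s'_K)$ with $|K|\le k$ and $s'_K \in S_K$; one checks in polynomial time that $s'_i \neq s_i$ for each $i \in K$ and that $p_i(s'_K, s_{-K}) > p_i(s)$ for each $i \in K$, both of which are direct from the definitions.

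\emph{Hardness.} For NP-hardness of the complement I would reduce from \textsc{Clique}. Given a graph $H = (V_H, E_H)$ and integer $\ell \ge 2$, I would construct a coordination game $\mathcal G(G, A)$ as follows. The node set of $G$ is $V_H$ together with, for each $v \in V_H$, a set of $\ell-2$ fresh \emph{dummy} nodes $d_v^1,\ldots, d_v^{\ell-2}$ attached to $v$; the edge set is $E_H$ augmented by these attachment edges. The color palette is $\{c\}\cup\{a_v : v \in V_H\}$, with $c$ shared and each $a_v$ private to $v$. Set $A_v = \{a_v, c\}$ for every $v\in V_H$ and the singleton $A_{d_v^i} = \{a_v\}$ for every dummy. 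The input joint strategy $s$ assigns $a_v$ to $v$ and to each of its dummies, and I set $k := \ell$. A direct count gives $p_v(s) = \ell - 2$ for each $v \in V_H$, since only the $\ell-2$ dummies agree with $v$ (other real neighbors use distinct private colors).

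The key observation driving correctness is the following. Any profitably deviating coalition $K$ with $|K|\le k$ must be contained in $V_H$ (dummies are frozen) and each of its members can only switch to $c$ (the sole alternative strategy). After such a switch, the payoff of $v \in K$ becomes exactly $|N_v^H \cap K|$, because its dummies no longer match its color. Strict improvement therefore requires $|N_v^H \cap K|\ge \ell - 1$ for \emph{every} $v \in K$; since $|K|\le \ell$, this inequality forces $|K| = \ell$ and every pair of nodes inside $K$ to be adjacent in $H$, i.e., $K$ is an $\ell$-clique in $H$. The converse is immediate: any $\ell$-clique of $H$ yields a profitable deviation by letting its members jointly switch to $c$. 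Hence $s$ is a $k$-equilibrium in the constructed game iff $H$ has no $\ell$-clique, which closes the reduction.

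The main technical content is this payoff-accounting calibration: the number $\ell-2$ of dummies is chosen so that the current payoff and the maximum attainable coalitional payoff within a coalition of size at most $\ell$ are tight against each other, making ``profitable deviation of size $\le\ell$'' precisely equivalent to ``$\ell$-clique in $H$''. Once this is fixed, both directions of the biconditional are immediate and the reduction is clearly polynomial in the size of $(H,\ell)$.
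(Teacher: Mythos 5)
Your proof is correct and follows essentially the same route as the paper's: membership via a deviation certificate, and NP-hardness of the complement by reducing from \textsc{Clique} using one shared color, a private color per original node, and $k-2$ single-color dummy neighbors per node to calibrate the payoff of the starting profile to $k-2$. The paper's argument is identical in construction and in both directions of the equivalence.
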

\begin{proof}
It is easy to verify that  $k$-\textsc{Equilibrium} is in co-NP: a certificate of a NO-instance is a profitable deviation of a coalition of size at most $k$.

We show the hardness by reduction of the complement of \textsc{Clique}, which is a co-NP-complete problem. Let $(G,k)$ be an instance thereof. We construct an instance of  $k$-\textsc{Equilibrium} as follows. For $v \in V$ let $A_v = \{x_v, y\}$, where color $y$ and all colors $x_v, v\in V$ are distinct. Furthermore, for every node $v \in V$ we add $k-2$ nodes $u_v^1, \ldots, u_v^{k-2}$ and edges $\{v,u_v^i\}$ for $i = 1, \ldots, k-2$. These additional nodes can only choose the color $x_v.$ Let $s$ be the joint strategy in which every node $v \in V$ chooses $x_v$. We claim that this is a $k$-equilibrium if and only if $G$ has no clique of size $k.$

Suppose $G$ has a clique $K$ of size $k$. Then jointly deviating to $y$ yields to each node in $K$ a payoff of $k-1$, whereas every node has a payoff of $k-2$ in $s$. So this is a profitable deviation. For the other direction, suppose that there is a profitable deviation $s \betredge{K} s'$ by a coalition $K$ of size at most $k$. Then every node in $K$ deviates to $y$ and hence belongs to $V.$ Since every node in $K$ has a payoff of $k-2$ in $s$, $p_v(s') \geq k-1$ for all $v \in K.$ So $v$ is connected to at least $k-1$ nodes in $K$. This implies that $K$ is a clique of size $k.$
\qed
\end{proof}

We next show that for color forests the decision problem is in \cP. First we show that we can focus on certain profitable deviations which we call \emph{simple}: Fix a joint strategy $s$ and a coalition $K$. We call $K$ \emph{connected} if $G[K]$ is connected. A deviation $s \betredge{K} s'$ is \emph{simple} if $K$ is connected and $s' = (x_K, s_{-K})$ for some color $x.$

\begin{lemma}\label{lem:simple}
Let $s$ be a joint strategy in a coordination game. If there is a profitable deviation by a coalition of size at most $k$, then there is also a simple profitable deviation by a coalition of size at most $k$.
\end{lemma}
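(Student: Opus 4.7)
The plan is to start with an arbitrary profitable deviation $s \betredge{K} s'$ with $|K| \le k$ and massage it into one that (i) has all deviators switch to a common color $x$, and (ii) uses a coalition that is connected in $G$. The natural first attempt — simply take a connected component of $G[K]$ — fails because the members of that component may have chosen different colors in $s'$, so we proceed in the opposite order: first collapse to a single color, then take a component.

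Pick any color $x$ that was chosen by some deviator, and set $K_x := \{i \in K \mid s'_i = x\}$. Let $C$ be any connected component of $G[K_x]$ and define $s'' := (x_C, s_{-C})$. This is a legitimate deviation $s \betredge{C} s''$: for every $i \in C \subseteq K_x \subseteq K$ we have $s'_i = x \neq s_i$, so $s''_i = x \neq s_i$, and no coordinate outside $C$ changes. Clearly $C$ is connected, $|C| \le |K| \le k$, and all members of $C$ deviate to the common color $x$, so the deviation is simple; what remains is to verify it is profitable. I will do so by proving $p_i(s'') \ge p_i(s')$ for every $i \in C$, which combined with $p_i(s') > p_i(s)$ finishes the argument.

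Fix $i \in C$ and expand:
\begin{equation*}
p_i(s'') = |N_i \cap C| + |\{j \in N_i \setminus C : s_j = x\}|,
\qquad
p_i(s') = |\{j \in N_i : s'_j = x\}|.
\end{equation*}
For the second quantity, split $N_i$ into $N_i \cap K$ and $N_i \setminus K$. A neighbor $j \in N_i \cap K$ with $s'_j = x$ lies in $K_x$ and is adjacent to $i \in C$ in $G[K_x]$, hence lies in the component $C$. A neighbor $j \in N_i \setminus K$ with $s'_j = x$ simply satisfies $s_j = x$. Therefore
\begin{equation*}
p_i(s') = |N_i \cap C| + |\{j \in N_i \setminus K : s_j = x\}|.
\end{equation*}
Since $N_i \setminus K \subseteq N_i \setminus C$, comparing the two expressions gives $p_i(s'') \ge p_i(s') > p_i(s)$, as required.

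The only real subtlety is the first step — grouping deviators by target color instead of by component of $G[K]$ — and the observation that, thanks to $C$ being a full component of $G[K_x]$, every $K_x$-neighbor of a node in $C$ is already in $C$, which is what lets us add the contributions from $K \setminus C$ with $s_j = x$ without any cancellation. All remaining steps are direct bookkeeping on the definition of the payoff function.
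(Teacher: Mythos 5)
Your proof is correct and follows essentially the same route as the paper's: restrict to the deviators that switch to a fixed color $x$, take a connected piece of the induced subgraph, and check that each member's payoff can only increase because same-colored neighbors inside $K$ land in the chosen component while neighbors outside $K$ are unchanged. If anything, your formulation via a connected component of $G[K_x]$ is a slightly cleaner way of phrasing the paper's ``nodes with $s'_u = x$ reachable from $v$'', which is what guarantees connectedness of the deviating coalition.
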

\begin{proof}
Let $s \betredge{K} s'$ be a profitable deviation with $|K| \leq k$. Pick an arbitrary $v \in K$ and let $x = s'_v$. Let $L$ consist of those nodes $u \in K$ for which $s'_u = x$ and $u$ is reachable in $G[K]$ from $v.$ Let $s'' = (x_L, s_{-L})$. 
 Then the deviation to $s''$ is simple. For all nodes $u\in L$, we have $N_u^K(s') = N_u^L(s') = N_u^L(s'')$ by the definition of $L$. Furthermore,
$
N_u^{V \setminus K}(s') \subseteq N_u^{V \setminus L}(s') \subseteq N_u^{V \setminus L}(s'').
$
Hence 
\[
p_u(s') = |N_u^K(s')| + |N_u^{V \setminus K}(s')| \leq |N_u^L(s'')| + |N_u^{V \setminus L}(s'')| = p_u(s''),\]
which implies that the deviation to $s''$ is profitable for $u.$ \qed
\end{proof}

\begin{theorem}\label{thm:computation_color_forest}
Consider a coordination game on a color forest. Then there exists a polynomial-time algorithm that decides whether a given joint strategy is a $k$-equilibrium and, if this is not the case, outputs a profitable deviation of a coalition of size at most $k.$ 
\end{theorem}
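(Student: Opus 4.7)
The plan is to combine Lemma~\ref{lem:simple} with a knapsack-style tree dynamic program that is enabled by the color forest hypothesis. By Lemma~\ref{lem:simple}, the joint strategy $s$ fails to be a $k$-equilibrium if and only if there exists a \emph{simple} profitable deviation $s \betredge{K} s'$ with $|K|\le k$, i.e., a color $x$ and a set $K\subseteq V_x$ with $G[K]$ connected, $|K|\le k$, and $(x_K,s_{-K})$ strictly increasing the payoff of every $v\in K$. Since $G[V_x]$ is a forest, any connected $K\subseteq V_x$ must be a subtree of a single tree $T^*$ of $G[V_x]$, and in particular $N_v\cap K$ for $v\in K$ coincides with the set of neighbors of $v$ in $T^*$ that lie in $K$. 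This tree structure decouples the decisions at different children of any node, which is exactly what a tree DP needs.

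For each $x \in M$ and $v \in V_x$ I would precompute $b_v(x):=|\{u\in N_v : s_u = x\}|$ and $c_v(x):=p_v(s)-b_v(x)$. Every $v\in K$ satisfies $s_v\neq x$, so the new payoff of $v$ is $|N_v\cap K|+b_v(x)$, and strict profitability becomes the combinatorial condition $|N_v\cap K|\ge c_v(x)+1$. Thus, for each $x$ and each tree $T^*$ of $G[V_x]$, the task reduces to deciding whether there is a subtree $K\subseteq V(T^*)$ with $|K|\le k$ such that every $v\in K$ has at least $c_v(x)+1$ neighbors in $K$.

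I would root $T^*$ at an arbitrary vertex and, for each $v$ and each $t\in\{1,\dots,k\}$, compute two Booleans: $f(v,t)$ is true iff there is a connected $K'\subseteq V(T^*_v)$ with $v\in K'$ and $|K'|=t$ in which every $u\in K'$ meets its profitability threshold, \emph{under the assumption} that $v$'s parent in $T^*$ also lies in the global coalition; $g(v,t)$ is the analogous quantity when $v$ is the topmost node of the coalition (so $v$'s parent is not in $K$). The leaf cases reduce to $c_v(x)\le 0$ (for $f$) and $c_v(x)\le -1$ (for $g$). In the inductive step, I would merge $v$'s children one by one while maintaining a table indexed by (accumulated size, number of selected children); selecting child $w$ with subtree size $t_w\ge 1$ requires $f(w,t_w)$ to be true. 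At the end of the merge, $f(v,t)$ (resp.\ $g(v,t)$) is set true iff the total selected size equals $t-1$ and the number of selected children is at least $c_v(x)$ (resp.\ $c_v(x)+1$). A simple profitable deviation of size at most $k$ exists iff $g(v,t)$ holds for some color $x$, some $v\in V_x$, and some $t\le k$; the witness coalition is reconstructed by a standard backtracking pass through the DP tables. Summing the per-node $O(\deg(v)\cdot k^2)$ work over all colors and vertices gives a polynomial overall running time.

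The main obstacle is purely conceptual: one needs to notice that the color forest hypothesis, which was first used in Theorem~\ref{thm:color_forest} only to preclude unicolored cycles, in fact also forces any connected simple deviating coalition in a given color $x$ to be a subtree of a tree of $G[V_x]$, and hence decouples the contributions of siblings under any chosen root. Once this observation is made, defining the two DP states correctly in order to handle the ``parent-in-$K$ vs.\ parent-out-of-$K$'' distinction is the only subtle piece of bookkeeping, and the rest of the proof is routine.
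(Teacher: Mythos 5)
Your proposal is correct and follows essentially the same route as the paper: reduce to simple deviations via Lemma~\ref{lem:simple}, then run a rooted-tree dynamic program on each tree of $G[V_x]$ with two states per node distinguishing whether the parent belongs to the deviating coalition (your $f,g$ are the paper's $\m U^p, \m U$). The only difference is cosmetic: the paper computes the \emph{minimum} size of a qualifying coalition at each node (greedily picking children sorted by $D^p$), while you tabulate feasibility for every exact size $t \le k$ via a knapsack merge; both are polynomial and equally valid.
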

\begin{proof} 
For a statement $P$ we write below $\tlb P \trb$ to denote the variable that is 1 if $P$ is true and $0$ otherwise. For a function $f: V \to \mathbb R$ and $U \subseteq V$, let $f(U) = \sum_{v \in U} f(v).$ For a function $F: V \to 2^V$ let $F(U) = \bigcup_{v \in U} F(v).$

Let $s$ be a joint strategy. By Lemma \ref{lem:simple} it is
sufficient to check for the existence of \emph{simple} profitable
deviations by coalitions of size at most $k$. Thus, we let $x \in M$ and we
search for simple profitable deviations in which the coalition deviates to
$x$. Because a coalition in a simple deviation is connected, we can
check each connected component of $G[V_x]$ separately. Assume
without loss of generality that $G[V_x]$ itself is connected, i.e., is
a tree. Pick an arbitrary root $r$ of $G[V_x]$ and define for each node the \emph{children}, \emph{parent}, and \emph{rooted subtree} in the usual way (with respect to $r$). For each node $v
\in V_x$ let $\m C_v \su V_x$ denote the set of children of $v$ and
let $\m P_v \in V_x$ denote the parent of $v$ (if $v \neq
r$). Finally, let $\m T_v$ denote the subtree of $G[V_x]$ rooted at
$v$. For each node $v$, we define $\m U(v)$, $D(v) $, $\m U^p(v)$, and $D^p(v)$ as follows.

\begin{itemize}
\item $\m U(v)$ is a connected coalition $K \subseteq \m T_v$  of minimum size such that $v \in K$ and the deviation to $(x_{K},s_{-K})$ is profitable for all nodes in $K$ (if such a coalition exists). We denote the properties it has to satisfy by ($*$). 
\item $D(v) = |\m U(v)|$ if $\m U(v)$ exists and $\infty$ otherwise.
\item $\m U^p(v)$ is a connected coalition $L \subseteq \m T_v$ of minimum size such that 
$v \in L$ and the deviation to $(x_{L'}, s_{-L'})$ is profitable for all nodes in $L$, where $L' := L \cup \{\m  P_v\}$ (if such a coalition exists).  We denote the properties it has to satisfy by ($**$).  (Note that the deviation is not required to be profitable for $\m P_v$ even though $\m P_v \in L'.$)
\item $D^p(v) = |\m U^p(v)|$ if $\m U^p(v)$ exists and $\infty$ otherwise.
\end{itemize}

We can compute $D$, $D^p$, $\m U$ and $\m U^p$ using a dynamic program as follows. Let $v \in V_x$ and suppose we found these objects for all children of $v$. Let $U \subseteq \m C_v$ minimize $D^p(U)$ among all sets $U' \subseteq \m C_v$ that satisfy $D^p(U') < \infty$ and
\begin{eqnarray}\label{eqn:1}
|U'| +  \tlb s_{\m P_v} = x \trb > p_v(s)
\end{eqnarray}
if such a set exists. In this case set $\m U(v) = \{v\} \cup \m U^p(U)$ and $D(v) = |\m U(v)|$. Otherwise, we set it to $\infty$.

We first prove that $K := \m U(v)$  satisfies  ($*$) if $\m U(v)$ exists. The set $K$ is connected because $v \in K$ and all sets $\m U^p(u)$ are connected, for $u \in \m C_v$. It is profitable for $v$ to deviate to $x$ because of \eqref{eqn:1}.  The deviation is profitable for nodes $u \in K \setminus \{v\}$ because $\m U^p(u) \su K$, $\m P(u) \in K$ (by the connectivity of $K$) and $\m U^p(u)$ satisfies ($**$). Furthermore, $K$ is of minimal size amongst all coalitions that satisfy ($*$). Indeed, if $K'$ is another such coalition, then $U' := K \cap \m C_v$ satisfies \eqref{eqn:1} because it is profitable to deviate to $x$ for $v.$ It is profitable for $u \in U'$ to deviate to $x$ and hence $|K \cap \m T(u)| \geq D^p(u)$, which implies $|K'| \geq 1+ D^p(U').$ Therefore $|K'| \geq 1 + D^p(U)$ by the minimality of $U$. But $|K| = |\{v\} \cup \m U^p(U)| = 1 + D^p(U)$, which shows that $|K'| \geq |K|.$

Similarly, for $v \in V_x \setminus \{r\}$, let $W \subseteq \m C_v$ minimize $D^p(W)$ among all sets $W' \subseteq \m C_v$ that satisfy $D^p(W') < \infty$ and
\begin{equation}\label{eqn:2}
|W'| + 1 > p_v(s)
\end{equation}
if such a set exists. In this case set $\m U^p(v) = \{v\} \cup \m U^p(W)$ and $D^p(v) = |\m U^p(v)|$. Otherwise, we set $D^p(v) = \infty.$ Similar arguments as before show that if $\m U^p(v)$ exists then it indeed satisfies ($**$).

Note that we can compute $\m U(v)$ in polynomial time by sorting the nodes $u \in \m C_v$ in increasing order of $D^p(u)$ and then successively adding nodes to $\m U(v)$ until \eqref{eqn:1} is satisfied. Similarly, we can compute $\m U^p(v)$ efficiently. This shows that the algorithm runs in polynomial time.

Now, let $s \betredge{K} s'$ be a simple profitable deviation to $x$ such that $|K| \leq k$. Let $v$ be the root of $G[K]$ according to our previously fixed ordering. By the properties of the function $D$, we know that $D(v) \leq |K| \leq k$. Conversely, if $D(v) \leq k$ for some node $v$, then $\m U(v)$ of size $D(v) \leq k$ is the coalition we are looking for. 
\qed
\end{proof}

\subsection{Computing strong equilibria}

Next we focus on the problem of actually computing a strong equilibrium. As we show below, this is possible for certain graph classes.

\begin{corollary}
\label{cor:computation_color_forest}
Consider a coordination game on a color forest. Then a strong equilibrium can be computed in polynomial time. 
\end{corollary}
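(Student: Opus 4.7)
The plan is to combine Theorem~\ref{thm:computation_color_forest} (polynomial-time verification together with the output of a profitable coalitional deviation when one exists) with Theorem~\ref{thm:color_forest} (any profitable coalitional deviation in a color forest strictly increases the social welfare) into a straightforward iterative procedure.

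Concretely, I would proceed as follows. First, I would pick an arbitrary initial joint strategy $s^0$, say by letting each player $i$ choose an arbitrary color from $A_i$. Then I would iterate: at step $t$, run the algorithm of Theorem~\ref{thm:computation_color_forest} on the current joint strategy $s^t$ with parameter $k = n$. If the algorithm reports that $s^t$ is an $n$-equilibrium, then by definition $s^t$ is a strong equilibrium and I return it. Otherwise the algorithm outputs a profitable deviation $s^t \betredge{K} s^{t+1}$ of some coalition $K$ of size at most $n$, and I continue from $s^{t+1}$.

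To bound the number of iterations, observe that the game is played on a color forest, and that the subgraph $G[K]$ induced by any coalition $K$ is again a color forest: for every color $x$, the set $V_x \cap K$ is a subset of $V_x$, so $G[V_x \cap K]$ is a subgraph of the forest $G[V_x]$ and hence a forest. Thus Theorem~\ref{thm:color_forest} applies to each deviation encountered, yielding $\SW(s^{t+1}) > \SW(s^t)$. Because $\SW(s) = 2\,|E^+_s|$ is a non-negative even integer bounded above by $2|E|$, we have $\SW(s^{t+1}) \geq \SW(s^t) + 2$, and the iteration terminates after at most $|E|$ steps.

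Each iteration performs one call to the polynomial-time routine of Theorem~\ref{thm:computation_color_forest} and one update of the strategy profile, both of which run in polynomial time, so the overall procedure is polynomial. The final joint strategy is an $n$-equilibrium, i.e., a strong equilibrium. There is no real obstacle here: the two ingredients, strict monotonicity of $\SW$ along profitable joint deviations and a polynomial-time oracle for detecting and producing such deviations, are precisely what is needed, and the only thing to verify explicitly is that $G[K]$ inherits the color-forest property from $G$, which is immediate.
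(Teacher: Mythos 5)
Your proof is correct and follows essentially the same route as the paper: start from an arbitrary joint strategy, repeatedly invoke the algorithm of Theorem~\ref{thm:computation_color_forest} with $k=n$ to either certify a strong equilibrium or produce a profitable deviation, and use Theorem~\ref{thm:color_forest} to argue that the social welfare (a non-negative even integer bounded by $2|E|$) strictly increases at each step, so the process terminates in polynomially many iterations. Your explicit check that $G[K]$ inherits the color-forest property is a harmless addition the paper leaves implicit.
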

\begin{proof}
We begin with an arbitrary initial joint strategy $s$. Putting $k = n$, by Theorem \ref{thm:computation_color_forest} there is an algorithm that decides whether $s$ is a strong equilibrium and, if this is not the case, outputs a profitable deviation $s \betredge{K} s'.$ In the first case, we output $s$; in the second case, we repeat the procedure with $s'$. We know that $\SW(s)$ is a natural number and $\SW(s') > \SW(s)$ by Corollary \ref{thm:color_forest}, so at most $\max_{s \in S} \SW(s) \leq 2|E|$ steps are necessary to reach a strong equilibrium.
\qed
\end{proof}

\begin{theorem}
Consider a coordination game on a color complete graph. Then a strong equilibrium can be computed in polynomial time.
\end{theorem}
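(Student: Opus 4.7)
The plan is to compute a strong equilibrium by a greedy algorithm that exploits the clique structure furnished by color completeness. Maintain a set $U$ of unassigned players, initialized to $V$. At each iteration, find a pair $(x, C)$ with $x \in M$ and $C$ a connected component of $G[U \cap V_x]$ of maximum size; assign every player of $C$ to color $x$ (i.e., set $s_v := x$ for all $v \in C$); remove $C$ from $U$; and repeat until $U$ is empty. Since each iteration removes at least one player and computing connected components takes polynomial time, the overall procedure is polynomial. Because every component of $G[V_x]$ is a clique by color completeness, and this property is inherited by every induced subgraph $G[U \cap V_x]$, each cluster $C$ ever formed is a clique.

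To verify that the output $s$ is a strong equilibrium, I would invoke Lemma~\ref{lem:simple} and restrict attention to simple profitable deviations: a connected coalition $K$ all switching to some common color $x$. Such a $K$ lies inside one clique-component $C$ of $G[V_x]$, and writing $C_x := \{j \in C : s_j = x\}$ (disjoint from $K$), every $v \in K$ has post-deviation payoff $p_v(s') = |K| + |C_x| - 1$.

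The crux is to pick the player $v \in K$ with the smallest iteration index $i_v$ in the greedy algorithm, let $(y_v, C_v)$ denote the cluster assigned at $i_v$, and establish two clique-based facts. First, $p_v(s) = |C_v| - 1$: otherwise a neighbor $u$ of $v$ with $s_u = y_v$ would have been assigned before $i_v$, at which point $u$ and $v$ both sit in the same (clique) component of $G[U \cap V_{y_v}]$, so $v$ would have been swept into that earlier cluster, contradicting the choice of $i_v$. Second, every member of $K \cup C_x$ is still unassigned at the start of iteration $i_v$: for players in $K$ this holds by the minimality of $i_v$, and for players in $C_x$ the same clique argument applied to color $x$ excludes earlier assignment, since any such player sits in the clique $C \subseteq V_x$ together with $v$. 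Consequently the component of $v$ in $G[U \cap V_x]$ at iteration $i_v$ has size at least $|K| + |C_x|$, and the greedy's max-size comparison yields $|C_v| \geq |K| + |C_x|$. Combined with the profitability condition $|K| + |C_x| > |C_v|$, this is a contradiction.

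The main obstacle will be the clique-tracking bookkeeping sketched above — one must carefully verify that the clique property is preserved under vertex removal and, crucially, that the component of $v$ in $G[U \cap V_x]$ at iteration $i_v$ really does contain all of $K \cup C_x$. Once these clique observations are in place the counting collapses immediately, and the argument is essentially tailor-made for the color complete setting.
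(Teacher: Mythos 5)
Your proposal is correct, but it takes a different route from the paper. The paper's proof is a two-line reduction: it invokes Lemma~\ref{lem:congestion_iso} (coordination games on color complete graphs are monotone increasing singleton congestion games) and then cites the algorithm of Rozenfeld and Tennenholtz for computing strong equilibria in that class. You instead give a self-contained greedy algorithm --- repeatedly colour a maximum-size component of $G[U\cap V_x]$ over all colours $x$ --- and verify stability directly via Lemma~\ref{lem:simple} and the clique structure. Your correctness argument is sound: a simple profitable deviation $K$ to colour $x$ sits inside one clique component $C$ of $G[V_x]$, gives each member payoff $|K|+|C_x|-1$, and for the earliest-assigned $v\in K$ the set $K\cup C_x$ is still unassigned and forms a connected subgraph of $G[U\cap V_x]$ at iteration $i_v$, so maximality of the greedy choice gives $|C_v|\ge |K|+|C_x|$, contradicting $p_v(s')>p_v(s)\ge |C_v|-1$. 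Note that only the easy inequality $p_v(s)\ge |C_v|-1$ (from the clique $C_v$ being unicoloured) is needed, so your slightly incomplete justification of the exact equality $p_v(s)=|C_v|-1$ (you only rule out neighbours assigned \emph{before} $i_v$, not after) is harmless. What your approach buys is independence from the external citation and an explicit algorithm with an explicit invariant; what the paper's approach buys is brevity and the observation that the result is really a statement about singleton congestion games, with the greedy you describe being essentially the Rozenfeld--Tennenholtz procedure transported through the isomorphism.
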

\begin{proof}
This follows from Lemma \ref{lem:congestion_iso} and the corresponding result for monotone increasing congestion games in which all strategies are singletons, established in \cite{RT06}.
\qed
\end{proof}

\begin{theorem} \label{thm:computation_pseudoforest} 
Consider a coordination game on a pseudoforest. Then a strong equilibrium can be computed in  polynomial time. 
\end{theorem}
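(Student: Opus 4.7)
The plan is to combine the generalized ordinal c-potential from Theorem~\ref{thm:pseudoforest} with an efficient subroutine for detecting profitable deviations. Starting from an arbitrary joint strategy $s$, I would repeatedly invoke this subroutine to produce a profitable deviation $s \betredge{K} s'$, replace $s$ by $s'$, and halt when no such deviation exists. By the proof of Theorem~\ref{thm:pseudoforest}, the pair $P(s) := (\SW(s), |\{C : C \text{ is a unicolored cycle in } s\}|)$ strictly increases lexicographically along this run. Since $\SW(s)$ is a nonnegative integer bounded by $2|E|$ and a pseudoforest has at most $|V|$ cycles, $P$ takes only $O(|V|\cdot|E|)$ distinct values, so the main loop halts after polynomially many iterations and returns a strong equilibrium.

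The remaining task is to implement the subroutine in polynomial time. By Lemma~\ref{lem:simple} it suffices to look for \emph{simple} profitable deviations, in which a connected coalition $K$ switches to a common color $x$. I would iterate over every color $x \in M$ and every connected component $H$ of $G[V_x]$; since $G$ is a pseudoforest, so is $H$, and hence $H$ is either a tree or contains a unique cycle $C$. If $H$ is a tree, I would directly invoke the dynamic program of Theorem~\ref{thm:computation_color_forest} with $k = n$ on $H$ to find a profitable simple deviation inside $H$, or certify that none exists.

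In the unicyclic case, I would split according to whether $K \supseteq V(C)$. If $K \not\supseteq V(C)$, some $v \in V(C)$ lies outside $K$, and $K$ is contained in the forest $H - v$; iterating over the choice of $v$, I would adapt the tree DP of Theorem~\ref{thm:computation_color_forest} to $H - v$, treating $v$ as a ``frozen'' neighbor whose fixed color $s_v$ shifts the payoff threshold of any child of $v$ that is considered as an entry point into the coalition. If $K \supseteq V(C)$, I would root each tree dangling off a cycle vertex $v_i \in V(C)$ at $v_i$ and use the tree DP to precompute, for every $v_i$, the minimum sub-coalition inside its dangling tree that, together with assumed strategies of $v_i$'s two cycle-neighbors, makes the deviation profitable for $v_i$. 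A final linear-time pass around $C$, made acyclic by guessing the strategy of one fixed vertex of $C$, then combines these local tables into a global profitable coalition if one exists.

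The main obstacle is the cycle DP in the last step: the cycle creates a circular dependency that must be broken by guessing one fixed vertex's role, and one must carefully account for the contribution to each $v_i$'s new payoff coming from \emph{both} of its cycle-neighbors rather than from a single parent, as in the tree DP. Once this bookkeeping is in place each invocation of the subroutine runs in polynomial time, which combined with the polynomial bound on the number of iterations above yields the claimed polynomial-time algorithm.
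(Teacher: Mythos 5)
Your proposal is correct in outline but takes a genuinely different route from the paper. The paper never runs a local search: it computes the lexicographic maximum of the generalized ordinal c-potential $P(s) = (\SW(s), \#\text{unicolored cycles})$ \emph{directly}, by a dynamic program that finds a social optimum on each tree, and on each pseudotree compares the best solution obtained by deleting one cycle edge against the best solution in which the unique cycle is forced unicolored; Corollaries~\ref{thm:color_forest} and \ref{cor:one-cycle} then certify that this global maximizer is a strong equilibrium. You instead run best-response-style dynamics: the same potential bounds the number of iterations by $O(|V|\cdot|E|)$, and the work is pushed into a polynomial-time oracle for finding a simple profitable deviation, obtained by extending the color-forest DP of Theorem~\ref{thm:computation_color_forest} to unicyclic color classes $G[V_x]$. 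Each approach buys something: the paper's output is simultaneously a social optimum (which is what powers the strong-price-of-stability result), and it avoids building a deviation oracle altogether; your scheme is more modular and would generalize to any setting with a polynomially-bounded potential plus a deviation oracle, but it only guarantees a local maximum of $P$, not a social optimum.

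One remark on the place you flag as the main obstacle: in the sub-case $K \supseteq V(C)$ there is in fact no circular dependency to break, because every cycle vertex is committed to deviating to $x$, so each $v_i \in V(C)$ receives a contribution of exactly $2$ from its two cycle-neighbours and the dangling-tree computations for distinct $v_i$ decouple completely (no guessing of a fixed vertex's strategy is needed; one only has to discard this sub-case when some $v_i \in V(C)$ already plays $x$). The genuinely new ingredient relative to Theorem~\ref{thm:computation_color_forest} is therefore only the frozen-vertex variant in the sub-case $K \not\supseteq V(C)$, which is a routine modification of the tree DP. With that observation your argument goes through.
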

\begin{proof}
We first show that for a tree, a strong equilibrium can be computed efficiently via dynamic programming. By Corollary~\ref{thm:color_forest} it suffices to compute a social optimum. Let $T$ be a tree and root it at an arbitrary node $r \in V(T)$. Given a node $i \in V(T)$, let $\m T_i$ denote the subtree of $T$ that is rooted at $i$ and let $\m  C_i$ be the set of  children of $i$. Given a color $s_i \in S_i$, define $d_i(s_i)$ as the maximum social welfare achievable by the nodes in $\m T_i$ if node $i$ chooses color $s_i$. Note that for each leaf $i \in V(T)$ of $T$ we have $d_i(s_i) = 0$ for all $s_i \in S_i$. 
Consider a node $i \in V(T)$ that is not a leaf and assume we computed all values $d_j(s_j)$ for every $j \in \m C_i$ and $s_j \in S_j$. Define $\tlb s_j = s_i \trb$ to be 1 if $s_j = s_i$ and 0 otherwise.
We can then compute $d_i(s_i)$ for every $s_i \in S_i$ as follows: 
$$
d_i(s_i) = \sum_{\text{$j \in \m  C_i$}} \max_{s_j \in S_j} (d_j(s_j) + 2  \tlb s_j = s_i \trb).
$$
The intuition here is that we account for every child $j \in \m  C_i$ of $i$ for the maximum social welfare achievable in $\m T_j$ plus an additional contribution of $2$ if $i$ and $j$ choose the same color.

Computing $d_i(s_i)$ for all $s_i \in S_i$ takes time at most $O(m^2|\m  C_i|)$, where $m$ is the number of colors. Thus, it takes time $O(m^2 |V(T)|)$ to compute all values $d_r(s_r)$ for $s_r \in S_r$ of the root node $r$.
The optimal social welfare of the tree $T$ is then $\SW(T) = \max_{s_r \in S_r} d_r(s_r)$.
The corresponding optimal joint strategy $s^*_T$ can be determined using some standard bookkeeping.

Next suppose that $T$ is a pseudotree. Let $C = (i_1, \dots, i_k)$ be the unique cycle in $T$. Note that it might no longer be sufficient to simply compute a social optimum for $T$. Instead, the idea is to compute a social optimum $s^*_T$ of $T$ such that, if possible, $C$ is unicolored. 

Note that if such a social optimum does not exist, then there is an edge in $C$ that is not unicolored. Let $\SW(j)$, $j \in \{1, \dots, k\}$, be the maximum social welfare of the tree that one obtains from $T$ by removing edge $\{i_j, i_{j+1}\}$ from $C$ (where we define $i_{k+1} = i_1$). Note that we can efficiently compute $\SW(j)$ by using the dynamic program for trees described above.\footnote{Observe that we do not enforce that the endpoints of the removed edge $\{i_j, i_{j+1}\}$ obtain different colors in the optimal solution. In fact, subsequently it will become clear that we do not have to do so.} Let $\SW_1 = \max_{j = 1, \dots, k} \SW(j)$. Computing $\SW_1$ takes time $O(k \cdot m^2 |V(T)|) = O(nm^2 |V(T)|)$.

Next assume a social optimum exists in which all nodes of $C$ are unicolored. Note that if we remove the edges on $C$ from $T$ then $T$ decomposes into $k$ trees, rooted at $i_1, \dots, i_k$. We can compute $d_{i_j}(\cdot)$ for every root $i_j$ as described above. 
Let $R = \cap_{j = 1}^k S_{i_j}$ be the set of common colors of the nodes in $C$. 
If all nodes in $C$ choose color $c \in R$ then we obtain a social welfare of 
$$
\SW(c) = 2k + \sum_{j = 1}^k d_{i_j}(c).
$$
Let $\SW_2 = \max_{c \in R} \SW(c)$. The time needed to compute $\SW_2$ is at most $O(m^2 |V(T)| + k \cdot m)$.

Clearly, if $\SW_1 > \SW_2$ then there is no social optimum in which all nodes of $C$ have the same color. In this case, we choose an arbitrary social optimum. Otherwise, there exists a social optimum in which all nodes of $C$ have a common color. In this case, we choose such a social optimum. Let the resulting social optimum for pseudotree $T$ be $s^*_{T}$.

By proceeding this way for each pseudotree $T$ of the given pseudoforest $G$, we obtain a joint strategy $s^*$ that maximizes the social welfare and the number of unicolored cycles. By Corollary~\ref{cor:one-cycle}, $s^*$ is a strong equilibrium of $G$. The time needed per pseudotree $T$  is dominated by $O(n m^2 |V(T)|)$. The total time needed to compute $s^*$ is thus at most $O(n^2 m^2)$. 
\qed
\end{proof}

\section{Conclusions}
\label{sec:conc}

We introduced and studied a natural class of games which we termed coordination games on graphs. We provided results on the existence, inefficiency and computation of strong equilibria for these games. 


It would be interesting to prove existence of $k$-equilibria for other graph classes and to investigate the computational complexity of computing them. 
Another open question is to determine the (strong) price of anarchy
when the number of colors is fixed.  Yet another intriguing question
is for which $k \geq 2$ coordination games with transition value $k$
exist. In Section 4.5 we settled this question positively only for $k
= 2$ and $k = 4$.  In the future we also plan to study a natural
extension of our coordination games to hypergraphs.

Another natural question that comes to one's mind is whether \emph{super strong equilibria} exist. Recall that a joint strategy $s$ is a \emph{super strong equilibrium} if for all coalitions $K$ there does not exist a deviation $s' = (s'_K, s_{-K})$ such that $p_i(s') \geq p_i(s)$ for all $i \in K$ and $p_i(s') > p_i(s)$ for some $i \in K$. It is not hard to verify that super strong equilibria are not guaranteed to exist: Consider a path consisting of two edges and assume that the nodes have color sets $\{a\}, \ \{a, b\}$ and $\{b\}$, respectively. Clearly, a super strong equilibrium does not exist for this instance.


A natural generalization of our model are coordination games on weighted graphs. Here each edge $\{i, j\}$ has a non-negative weight $w_{ij}$ specifying how much player $i$ and $j$ profit from choosing the same color. It is easy to see that $\frac{1}{2}\SW$ continues to be an exact potential function for weighted coordination games, guaranteeing the existence of a Nash equilibrium. In fact, as observed in \cite{CD11}, this is an exact potential for coordination games with arbitrary weights. Coordination games on weighted graphs are studied in more detail in \cite{rahn:schaefer:2015}. In particular, the existence results for strong equilibria (Theorems \ref{thm:color_forest} and \ref{thm:pseudoforest}) and 2-equilibria (Corollary \ref{cor:2-eq}) do not hold for these games. We refer the reader to \cite{rahn:schaefer:2015} for further studies of these games.

Another natural variation is to consider coordination games on
  weighted \textit{directed graphs}. Given a directed graph
  $G=(V,E)$, we say that node $j$ is a \textit{neighbour} of node
  $i$ if there is an edge $(j,i)$ in $G$. Each edge $(j,i)$ has a
  non-negative weight $w_{ji}$ specifying how much player $i$ profits
  from choosing the same color as player $j$. The transition from
  undirected to directed graphs changes the status of the games
  substantially. In particular, Nash equilibria need not always exist
  in these games. Moreover, the problem of determining the existence
  of Nash equilibria is NP-complete. We refer the reader to
  \cite{ASW15} and \cite{SW16} for further studies of these games.  

\subsection*{Acknowledgements}

The fact that for the case of a ring the coordination game has the
c-FIP was first observed by Dariusz Leniowski. We thank Jos\'e Correa
for allowing us to use his lower bound in Theorem~\ref{thm:kpoa}. It
improves on our original one by a factor of $2$.  We thank the
anonymous reviewers for their valuable comments.  First author is also
a Visiting Professor at the University of Warsaw.  He was partially
supported by the NCN grant nr 2014/13/B/ST6/01807.

\bibliographystyle{spmpsci}
\bibliography{e,apt,clustering}

\appendix

\section{c-FIP and generalized ordinal c-potentials}
\label{app:gen-ord-potential}

\begin{theorem}
A finite game has the c-FIP iff a generalized ordinal c-potential for it exists. 
\end{theorem}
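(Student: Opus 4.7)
The argument splits into two directions. The easy direction is: if a generalized ordinal c-potential $P:S\to A$ with strict partial order $\succ$ exists, then any c-improvement path $s^1,s^2,\dots$ yields $P(s^1)\prec P(s^2)\prec\dots$, so the values $P(s^i)$ are pairwise distinct by irreflexivity of $\succ$. Since the game is finite, $P(S)$ is finite, so the sequence must terminate. This shows the game has the c-FIP.

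For the converse direction, assume $\mathcal G$ has the c-FIP. The idea is to take $A:=S$ and $P$ the identity, and to use the "reachable via a profitable improvement sequence" relation (reversed) as the strict partial order. Concretely, define a binary relation $\to^+$ on $S$ by: $s\to^+ s'$ iff there exists a finite sequence $s=s^0,s^1,\dots,s^\ell=s'$ with $\ell\geq 1$ such that for every $i\in\{1,\dots,\ell\}$ there is a coalition $K_i$ with $s^{i-1}\betredge{K_i}s^i$ a profitable deviation. Let $\succ$ be the reverse of $\to^+$, i.e.\ $s'\succ s$ iff $s\to^+ s'$.

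The key step is to verify that $\succ$ is a strict partial order on $S$. Transitivity is immediate by concatenating improvement sequences. Irreflexivity is where the c-FIP is used: if $s\to^+ s$, then the improvement sequence witnessing this can be iterated indefinitely, producing an infinite c-improvement path (possibly after extending to a maximal one), which contradicts the c-FIP. Hence $\succ$ is a strict partial order. Finally, the potential property holds trivially: if $s\betredge{K}s'$ is a profitable deviation, then the length-one sequence $s,s'$ witnesses $s\to^+ s'$, so $P(s')=s'\succ s=P(s)$. Combining both directions proves the equivalence.

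The only subtle point is the irreflexivity argument; care is needed because c-improvement paths in the paper are by definition \emph{maximal}, so one should note that any finite profitable sequence can be extended (finitely, by c-FIP) to a maximal one, and concatenating a cycle $s\to^+ s$ with itself arbitrarily many times yields a sequence that cannot be extended to a finite maximal one, contradicting the c-FIP. No further obstacles are expected.
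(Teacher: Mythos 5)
Your proof is correct, and the easy direction (potential $\Rightarrow$ c-FIP) coincides with what the paper asserts. For the nontrivial direction, however, you take a genuinely different route. The paper follows Milchtaich's argument: it considers the finitely branching tree of all c-improvement paths, invokes K\"onig's Lemma to conclude from the c-FIP that this tree is finite, and then defines $P(s)$ as the number of prefixes of c-improvement paths terminating in $s$, obtaining a potential with values in $\mathbb{N}$ under the usual strict linear order. You instead take $P$ to be the identity on $S$ and let $\succ$ be the reverse of the transitive closure of the one-step profitable-deviation relation; the c-FIP is used only to rule out cycles in this relation, which gives irreflexivity. Your construction is more elementary (no appeal to K\"onig's Lemma) and your handling of the subtlety about maximality is sound: an infinite sequence of profitable deviations is automatically a maximal, hence infinite, c-improvement path, so a cycle $s \to^+ s$ directly contradicts the c-FIP. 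What the paper's approach buys in exchange is a more concrete potential, namely one taking values in the natural numbers with a strict \emph{linear} order, whereas yours produces only a partial order on $S$ itself; for the statement as formulated (which permits an arbitrary set $A$ with a strict partial ordering) both are equally adequate.
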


\begin{proof}
  $(\Rightarrow)$ We use here the argument given in the proof of \cite{Mil96}
  of the fact that every finite game that has the FIP (finite
  improvement property) has a generalized ordinal potential.
  
  Consider a branching tree of which the root has all joint strategies
  as successors, of which the non-root elements are joint strategies,
  and of which the branches are the c-improvement paths. Because
  the game is finite, this tree is finitely branching.
  
  K\"{o}nig's Lemma of \cite{Kon27} states that any finitely branching
  tree is either finite or it has an infinite path.  So by the
  assumption, the considered tree is finite. Hence the
  number of c-improvement paths is finite.  Given a joint strategy $s$,
  define $P(s)$ to be the number of prefixes of the c-improvement paths
  that terminate in $s$.  Then $P$ is a generalized ordinal c-potential,
where we use the strict linear ordering on the natural numbers.

$(\Leftarrow)$ Immediate, as already noted in  \cite{Holzman97}. 
\qed
\end{proof}

\end{document}